\documentclass[11pt]{article}
\usepackage[margin=1in]{geometry}
\usepackage{amsmath,amssymb,amsthm}
\usepackage{graphicx}
\usepackage{booktabs,array,float}
\usepackage{natbib}
\usepackage{enumitem}
\usepackage[hidelinks,hyperfootnotes=false]{hyperref}
\hypersetup{pdftitle={Execution and Evaluation: A New Occupational Measure and Long-Run Employment Gradients},pdfauthor={Li Gan}}
\usepackage{setspace}

\theoremstyle{plain}
\newtheorem{proposition}{Proposition}
\newtheorem{assumption}{Assumption}
\newtheorem{corollary}{Corollary}

\title{Execution and Evaluation: A New Occupational Measure\\ and Long-Run Employment Gradients}
\author{Li Gan\thanks{Department of Economics, Texas A\&M University. Email: ganli@tamu.edu; ligan98@gmail.com. I thank colleagues and students for discussions; all errors are mine. Data sources and measure construction are detailed in Section~\ref{sec:data}.}}
\date{\today}

\begin{document}
\maketitle

\begin{abstract}
\noindent Artificial intelligence automates execution more readily than evaluation: producing output is cheap, judging whether it is correct is not. Exposure measures rank tasks by whether AI can perform them, not by which function the human supplies. I score all $19{,}265$ O*NET task statements under fixed rubrics to build occupation-level execution and AI-capability shares. The execution share is reproducible across model coders and O*NET vintages and distinct from AI capability and routine-task intensity; it is a model-based measure, not human-validated ground truth, and adds only modest power beyond O*NET's evaluation activities. In a harmonized panel, employment growth is lower in execution-heavy white-collar occupations in every window since 2012, and equality of slopes cannot be rejected: the gradient is a secular trend rather than an AI-era event, largely between occupational families. The vintage-valid capability gradient steepens after 2022, a change that is dated but not causally attributable. The evidence establishes a measure and a chronology, not an AI-caused effect.
\end{abstract}

\section{Introduction}
\label{sec:intro}

Generative AI has renewed a longstanding question in labor economics: which dimensions of work determine whether a new technology substitutes for workers, complements them, or reorganizes the occupations in which they are employed? Most empirical exposure measures answer one part of that question. They rank tasks or occupations by whether a technology can perform the work, by the semantic overlap between tasks and patents, or by observed use of the technology \citep{eloundou2024gpts,webb2020impact,handa2025anthropic}; \citet{massenkoff2026labor} combine the last two, weighting theoretical capability by observed usage into a single ``observed exposure'' index. Those measures are indispensable for locating technological capability. They are less informative about the human function that remains valuable when the technology can generate a first-pass output.

This distinction matters for interpreting recent evidence. Employment of younger workers has weakened relative to employment of more experienced workers in highly exposed occupations. \citet{brynjolfsson2025canaries} report a relative decline of about $16$ percent for workers aged 22 to 25 in the most exposed occupations, concentrated where AI use appears to automate rather than augment work. \citet{hosseini2025seniority} find a parallel within-firm decline in junior relative to senior employment, driven mainly by hiring. The timing and interpretation remain contested. \citet{iscenko2026ladder} argue that part of the pattern predates plausible enterprise adoption, that the youngest age band is mechanically sensitive to a general hiring slowdown, and that exposed occupations are concentrated in interest-rate-sensitive sectors. \citet{lambert2026ladder} identify a further confound of the same kind: generative-AI exposure and post-pandemic work-from-home exposure fall on nearly the same white-collar, computer-intensive occupations, and when both are entered jointly in hiring and posting data covering $243$ million new hires, the work-from-home effect persists while the AI coefficient attenuates sharply and is often indistinguishable from zero.

Even when an employment change is associated with AI, a capability ranking alone does not reveal why workers are affected. A task may involve \emph{execution}: producing code, drafting a report, calculating an estimate, entering data, or carrying out a procedure. Another task may involve \emph{evaluation}: diagnosing, auditing, reviewing, approving, or deciding whether an output is correct and fit for use. AI may be capable of performing parts of both kinds of task, but the economic role of the human input differs. Execution produces the first-pass output. Evaluation supplies verification, certification, responsibility, and judgment. The same occupation can contain both functions, and the two dimensions need not align with routineness. The distinction has recently been given a macroeconomic form: \citet{catalini2026agi} model the transition as a race between a falling cost of automating execution and a human-bottlenecked cost of verifying output, in which verification rather than generation becomes the binding constraint. That literature supplies the economics of treating the two functions as distinct inputs with diverging marginal costs. What it does not supply is a measure: an occupation-level statement of how much of the human work is execution and how much is evaluation. This paper provides one.

This paper makes the execution-evaluation distinction observable. I score the universe of $19{,}265$ O*NET task statements under a fixed rubric and aggregate the scores into an occupation-level execution share. I separately score the same statements for AI capability. Constructing both measures on the same task universe permits direct comparisons and joint classifications that are unavailable when one measure is task-based and the other is assembled from separate occupation descriptors.

An important benchmark already exists inside O*NET. Work-activity ratings record how strongly occupations involve evaluating information against standards, monitoring, inspecting, coaching, and consultation. The new execution share is not presented as a replacement for those human-rated activities. It is strongly negatively related to their evaluation composite, which is precisely what construct validity requires. The composite also predicts 2019--2025 employment growth more strongly. On the common cognitive sample, adding execution share to the composite increases $R^2$ by only $0.006$, and its standardized coefficient becomes insignificant. In the full controlled specification, the execution share contributes an incremental $R^2$ of $0.011$, compared with $0.058$ for the composite, and is marginally significant. The incremental contribution is therefore narrower: a reproducible task-level axis, aligned statement by statement with AI capability and portable to other task corpora and O*NET vintages. This paper states that comparison directly rather than claiming that existing O*NET activities leave a large predictive gap.

The measurement evidence is otherwise extensive. Within cognitive occupations, execution share is only moderately related to AI capability and retains nearly all of its variation after capability is partialled out. It is positively but imperfectly related to routine-task intensity; occupations such as surgeons and mathematicians illustrate non-routine execution, while proofreaders and auditors illustrate relatively routine evaluation. Alternative task weights produce correlations of $0.98$ to $0.99$ with the baseline measure. Rebuilding the score on the 2026 O*NET release produces a correlation of $0.996$ with the pre-generative-AI task universe. An occupation-blind audit by independent model coders reproduces the task scores at $r=0.924$. These exercises establish reproducibility and convergent validity, although they do not substitute for human-coded validation.

The main empirical contribution is a set of employment facts with an explicit chronology. In harmonized OEWS data, employment growth was lower in execution-intensive white-collar occupations not only after 2022 but also in 2012--2016 and 2016--2019. Equality of the execution slopes through 2025 cannot be rejected on an estimator-consistent panel. The six-year 2019--2025 coefficient is also largely a between-family association: adding two-digit SOC major-group fixed effects reduces the execution coefficient from about $-0.70$ to $-0.14$ and removes its significance. The finding is therefore best read as a long-run reallocation across occupational families, not as a task-level effect that begins with generative AI.

The capability evidence differs in one respect. An early-2023 exposure measure is significantly negative in every available window, so capability-exposed occupations already had lower employment growth before generative AI. Yet its slope becomes significantly more negative in 2022--2025 than in the 2019--2022 pandemic window on the estimator-consistent panel. That steepening is dated but not causally attributable to AI in a cross-sectional design.

Observed usage provides a complementary post-2022 fact. Employment contraction is concentrated among occupations high in both AI use and execution share, and the high-high cell remains negative with broad occupational-family fixed effects. However, observed usage is measured within the outcome window, usage and vintage-valid capability cannot be separated as correlates, and neither the continuous interaction nor the family-adjusted super-additivity test is significant. The joint-cell pattern is therefore descriptive concentration, not evidence of a multiplicative treatment effect. A positive wage coefficient for execution-intensive occupations further cautions against reading the employment gradient as a simple inward labor-demand shift. A CPS test of entry-age employment is also null on a margin that misses hiring and labor-force entry.

Section~\ref{sec:framework} develops the economic structure that motivates the measure. It retains in the main text the execution technology, the execution-share sorting proposition, the mapping to the employment regressions, and the cohort accounting behind the evaluator-pipeline hypothesis. The theory is deliberately narrow: it organizes the comparison between execution and evaluation but does not convert the occupational cross-section into a causal design. Appendix~\ref{app:formal} supplies the unit-cost derivations, full proofs, quantitative extension, and comparison with alternative mechanisms.

The paper contributes to three literatures. It adds a new task attribute to the task-based analysis of technological change \citep{autor2003skill,acemoglu2018race,acemoglu2022tasks}; it complements occupation-level AI exposure and usage measures by distinguishing the function performed within a task; and it connects the labor-market evidence to learning by doing and experience-based human-capital formation \citep{arrow1962learning,becker1964human,benporath1967production,mincer1974schooling}, and to the recent literature in which automation operates on skill formation through the entry-level tasks that produce it \citep{garicano2025training,ide2025automation,afrouzi2026automation}. Section~\ref{sec:framework} presents the model and states which predictions the current data can test. Section~\ref{sec:data} constructs and validates the measures, with particular attention to the O*NET activity benchmark. Section~\ref{sec:results} reports the long-run, cross-sectional, post-2022, wage, and seniority facts. Section~\ref{sec:conclusion} concludes and identifies the evidence needed for causal and delayed-mechanism tests.

\section{Model and Empirical Scope}
\label{sec:framework}

\subsection{Production and execution-share sorting}

A role performs a unit mass of tasks. A fraction $e_j\in[0,1]$ are execution tasks and $1-e_j$ are evaluation tasks. Execution produces a first-pass output. Evaluation determines whether that output is correct, reliable, and fit for use. The second occupation characteristic, $a_j\in[0,1]$, is the share of tasks on which AI is capable. The two dimensions can cross: an AI-capable task may still require a human verdict, while an execution task may be beyond the technology because it requires physical action.

Execution output combines human input $s^E_j$, hired at wage $w$, with AI input $Z_j$, purchased at price $q$, under a constant-elasticity technology,
\begin{equation}
y^E_j=\left[\alpha(s^E_j)^\rho+(1-\alpha)(\theta_j A_t Z_j)^\rho\right]^{1/\rho},
\qquad \sigma=\frac{1}{1-\rho}>1.
\label{eq:Y}
\end{equation}
Frontier capability $A_t$ raises the efficiency of AI input, and role-specific productivity $\theta_j$ is increasing in the occupation's AI-capable share $a_j$. Cost minimization yields conditional human execution demand $s_j^*(A_t)$. Its capability elasticity is
\begin{equation}
\frac{\partial\ln s_j^*}{\partial\ln A_t}=-\sigma(1-\psi^E_j)<0,
\label{eq:psi}
\end{equation}
where $\psi^E_j$ is the human cost share of execution. Appendix~\ref{app:formal} derives the unit cost and equation~\eqref{eq:psi}. The conditional labor quantity falls for any positive substitution elasticity; the maintained case $\sigma>1$ additionally makes the human cost share fall as capability rises.

Evaluation is not assumed to be technologically untouched. The narrower assumption is that some human input remains because certification, liability, and ownership of judgment are not fully delegable.

\begin{assumption}[Oversight residual]
\label{ass:residual}
A fraction $\underline r\in(0,1]$ of evaluation labor remains human at any level of $A_t$.
\end{assumption}

Normalize each execution task to require one unit of execution output and each evaluation task one unit of evaluation labor. Conditional human labor per unit of occupational output is
\begin{equation}
H_j(A_t)=e_j s_j^*(A_t)+(1-e_j)\underline r.
\label{eq:H}
\end{equation}
Equation~\eqref{eq:H} isolates the margin the execution measure is intended to capture. It holds occupational output fixed and treats the oversight residual as common across occupations. Both restrictions matter empirically: cheaper execution can expand output, and licensing or liability can make the retained human share vary systematically across occupations.

\begin{proposition}[Execution-share sorting]
\label{prop:sorting}
Under Assumption~\ref{ass:residual}, both the level decline and the proportional decline of conditional human employment as $A_t$ rises are increasing in $e_j$, holding $a_j$ fixed.
\end{proposition}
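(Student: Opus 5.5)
We work directly from equation~\eqref{eq:H}, with $a_j$ (hence $\theta_j$) held fixed, so that $s_j^*(A_t)$ and its capability elasticity in equation~\eqref{eq:psi} do not depend on $e_j$. Fix two capability levels $A_0<A_1$ (or work with the derivative in $\ln A_t$). Define the level decline $D_j\equiv H_j(A_0)-H_j(A_1)$ and the proportional decline $P_j\equiv -\partial\ln H_j/\partial\ln A_t$. Equivalently, for the discrete version, $P_j=D_j/H_j(A_0)$. The goal is to show that both are increasing in $e_j$.

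\textbf{Level decline.} Because the oversight term $(1-e_j)\underline r$ does not move with $A_t$, we have
\begin{equation*}
D_j=e_j\bigl[s_j^*(A_0)-s_j^*(A_1)\bigr],
\qquad
\frac{\partial H_j}{\partial\ln A_t}=e_j\,\frac{\partial s_j^*}{\partial \ln A_t}=-e_j\,\sigma(1-\psi^E_j)\,s_j^*.
\end{equation*}
By equation~\eqref{eq:psi}, $s_j^*$ is strictly decreasing in $A_t$. Since $0<\psi^E_j<1$, this holds whenever AI input is used at all. The bracket is therefore positive and independent of $e_j$, so $D_j$ is linear and strictly increasing in $e_j$.

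\textbf{Proportional decline.} Write
\begin{equation*}
P_j=\frac{e_j\,\sigma(1-\psi^E_j)\,s_j^*}{e_j s_j^*+(1-e_j)\underline r}
=\sigma(1-\psi^E_j)\,\frac{e_j s_j^*}{e_j s_j^*+(1-e_j)\underline r}.
\end{equation*}
The prefactor does not depend on $e_j$. The second factor is the execution share of conditional human labor, $\lambda(e)=e s/(e s+(1-e)\underline r)$. Differentiating gives
\begin{equation*}
\lambda'(e)=\frac{s\,\underline r}{\bigl(e s+(1-e)\underline r\bigr)^2}>0,
\end{equation*}
since $\underline r>0$ by Assumption~\ref{ass:residual} and $s_j^*>0$. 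Hence $P_j$ is increasing in $e_j$. The discrete analogue follows the same way: $D_j/H_j(A_0)=\Delta s\cdot e/(e s_0+(1-e)\underline r)$, and this ratio is increasing in $e$ because its derivative has numerator $\Delta s\,\underline r>0$.

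\textbf{Main obstacle.} The delicate step is the proportional claim rather than the level claim. It requires $\underline r>0$, which Assumption~\ref{ass:residual} supplies; with $\underline r=0$, $\lambda\equiv1$ and monotonicity would fail, holding only weakly. It also requires that $s_j^*$, $\psi^E_j$ and the prefactor $\sigma(1-\psi^E_j)$ do not vary with $e_j$. This independence is where ``holding $a_j$ fixed'' does the work: it fixes $\theta_j$, so cost minimization of the execution technology~\eqref{eq:Y} yields the same $s_j^*$ and $\psi^E_j$ across occupations that differ only in $e_j$.

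I would state this independence explicitly, citing the unit-cost derivation in Appendix~\ref{app:formal}. I would also note that strictness needs $s_j^*$ to actually fall, i.e.\ $\psi^E_j<1$, which holds for any interior CES solution with $Z_j>0$.
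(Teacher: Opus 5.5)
Your proof is correct and follows the paper's argument. The paper likewise writes the level derivative as $e_j\,\partial s_j^*/\partial A_t$, with $s_j^*$ independent of $e_j$ at fixed $a_j$. It then factors the proportional derivative into the execution labor share $f_j(e_j)$, strictly increasing because $\underline r>0$, times a negative elasticity. Your explicit derivative $s\underline r/(es+(1-e)\underline r)^2$ and the discrete-change version fill in steps the paper only asserts.
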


\paragraph{Proof sketch.} At a fixed capability share, equation~\eqref{eq:H} gives $\partial H_j/\partial A_t=e_j\,\partial s_j^*/\partial A_t$, which becomes more negative as $e_j$ rises. For proportional changes,
\[
\frac{\partial\ln H_j}{\partial A_t}=\underbrace{\frac{e_j s_j^*}{e_j s_j^*+(1-e_j)\underline r}}_{f_j(e_j)}\frac{\partial\ln s_j^*}{\partial A_t}.
\]
The labor share $f_j(e_j)$ is increasing in $e_j$, while the second term is negative by equation~\eqref{eq:psi}. Appendix~\ref{app:formal} gives the full proof.

The proposition supplies the model's task-content wedge. A one-dimensional capability measure ranks occupations by whether AI can perform their tasks. It contains no separate execution term once capability is held fixed. Equation~\eqref{eq:H} instead predicts that the same capability improvement has a larger conditional labor effect in a role whose human work is concentrated in execution rather than evaluation. This is a comparison with capability-only exposure measures, not a claim that a general task model cannot incorporate evaluation, regulation, or other task attributes.

\subsection{From the model to the empirical specifications}

Over a window in which frontier capability rises by $\Delta A_t$, the conditional proportional employment change can be written locally as
\begin{equation}
D_j\approx-f_j(e_j)h(a_j)\Delta A_t,
\qquad f_j'>0,\quad h'>0,
\label{eq:Dstar}
\end{equation}
where $h(a_j)$ summarizes the execution-displacement response associated with role-specific AI capability. A first-order empirical approximation is
\begin{equation}
D_j=\beta_0+\beta_e e_j+\beta_a a_j+\beta_{ea}(e_j a_j)+X_j'\gamma+\varepsilon_j.
\label{eq:reg}
\end{equation}
With $D_j$ defined as signed log employment change, the conditional model predicts $\beta_e<0$ and, more sharply, $\beta_{ea}<0$: execution intensity should matter more where AI is more capable. The additive coefficient $\beta_a$ is not signed by the reduced form alone. Output expansion, complementarity between generated output and evaluation, and occupation-specific oversight requirements can offset conditional displacement. As $e_j$ approaches zero, capability also operates on a vanishing execution component.

The observed outcome is an occupation-level employment stock rather than labor required per unit of fixed output. Its coefficient combines substitution, scale, labor supply, worker composition, and changes in the occupational task mix. The positive execution coefficient in the wage regression later in the paper makes this distinction consequential rather than semantic. I therefore use equation~\eqref{eq:reg} to organize conditional associations. It does not identify $\sigma$, the retained evaluation share, or an AI treatment effect.

The empirical design has different leverage over the two implications. The additive gradients are estimated with reasonable precision. The interaction is harder to identify because $e_j a_j$ is collinear with the levels in a single occupation cross-section; both the continuous interaction and the family-adjusted super-additivity test are insignificant. Hiring or vacancy flows combined with predetermined adoption variation would be better suited to the multiplicative prediction in equation~\eqref{eq:Dstar}.

\subsection{Execution practice and the evaluator pipeline}

The execution-evaluation distinction also has a dynamic implication if evaluative judgment is produced by practice. Let $N_t=\nu(c^N_t)$ be the mass of novices, increasing in the entry-level execution resource $c^N_t$. Individual evaluation skill accumulates through the human execution work actually performed,
\begin{equation}
s^V_{t+1}=(1-\delta_V)s^V_t+\Phi(\lambda e_t s_t^*),
\label{eq:sV}
\end{equation}
and the stock of senior evaluators follows
\begin{equation}
S_{t+1}=(1-\delta_S)S_t+\mu N_{t-L},
\qquad L\ge1.
\label{eq:senior}
\end{equation}
The first equation is learning by doing: evaluation skill is produced by human execution practice, not by exposure to a cost share. The second is cohort accounting: today's senior stock depends on the entry cohort from $L$ periods earlier. \citet{afrouzi2026automation} develop this channel in general equilibrium, determining learning by doing jointly with the automated task share and the worker-to-manager transition, and show that cheaper automation can push an economy into a human-capital trap. The equations above are a reduced form of that mechanism, retained because they generate the testable statements below; the contribution here is not another model of the channel but a measure of the task content it operates on.

\begin{proposition}[Cohort-lag response]
\label{prop:pipeline}
Suppose entry-level execution falls at date zero and the novice mass declines from $N_0$ to $N_\infty<N_0$. Entry employment falls immediately. The path of the trained senior stock is unaffected for $L$ periods. Thereafter it converges geometrically toward $S_\infty=\mu N_\infty/\delta_S$ at rate $1-\delta_S$.
\end{proposition}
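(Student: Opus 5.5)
The plan is to treat equation~\eqref{eq:senior} as a linear first-order difference equation driven by a lagged input, and to solve it explicitly. The first step is to pin down the novice path. Since $N_t=\nu(c^N_t)$ with $\nu$ increasing, a fall in entry-level execution $c^N$ at date zero lowers $N_t$ at $t=0$. I will take the stylized path stated in the proposition: $N_t=N_0$ for $t<0$ and $N_t=N_\infty$ for $t\ge 0$. The system starts in the pre-shock steady state, $S_t=\mu N_0/\delta_S$. This gives the first claim directly: entry employment $N_t$ drops at $t=0$ with no lag, because it is a contemporaneous function of $c^N_t$.

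The second step is the $L$-period invariance, proved by induction on $t$. For $t=0,\dots,L$ the input $N_{t-1-L}$ entering $S_t$ is indexed before date zero, so it equals $N_0$. Hence $S_t$ follows exactly the pre-shock recursion and stays at $\mu N_0/\delta_S$ (or on whatever path it was already on). The care needed here is an off-by-one question. I will state precisely that $S_{L+1}$ is the first stock to depend on $N_0$'s successor $N_\infty$, via $S_{L+1}=(1-\delta_S)S_L+\mu N_0$ versus $S_{L+2}$ involving $N_{1}$. I need to check which index the shock first enters under the timing $S_{t+1}$ on $N_{t-L}$: it is $S_{L+1}=(1-\delta_S)S_L+\mu N_\infty$. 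So the senior path is unchanged through $S_L$, that is, for $L$ periods after date zero.

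The third step is convergence. For $t\ge L$ the recursion is $S_{t+1}=(1-\delta_S)S_t+\mu N_\infty$, which has fixed point $S_\infty=\mu N_\infty/\delta_S$, assuming $\delta_S\in(0,1]$. Subtracting the fixed point gives
\[
S_{t}-S_\infty=(1-\delta_S)^{t-L}\,(S_L-S_\infty),\qquad t\ge L,
\]
so the gap shrinks geometrically at rate $1-\delta_S$. Because $S_L=\mu N_0/\delta_S>S_\infty$, the decline is also monotone. The main obstacle is not analytic; it is stating the timing convention cleanly so that "unaffected for $L$ periods" matches the indexing in equation~\eqref{eq:senior}, and noting that if $S$ was not initially at steady state, the claim should read as "identical to the no-shock counterfactual path."

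Finally, I will remark that the individual-skill equation~\eqref{eq:sV} is not needed for this proposition. It affects evaluator quality, not the headcount $S_t$.
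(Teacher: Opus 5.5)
Your proposal is correct and follows the paper's own proof. Both derive the immediate entry effect from $\nu$ being increasing, then show that $S_t$ coincides with the no-shock path while the lagged inflow still loads on pre-shock novices, and then obtain geometric convergence of $S_{t+1}=(1-\delta_S)S_t+\mu N_\infty$ to $\mu N_\infty/\delta_S$ at rate $1-\delta_S$. Your closed form $S_t-S_\infty=(1-\delta_S)^{t-L}(S_L-S_\infty)$ and your check that $S_{L+1}$ is the first affected stock make precise what the paper states in words. In the indexing passage, remove the slip where you first write $\mu N_0$ and then correct it to $\mu N_\infty$; it comes from reusing $N_0$ for both the pre-shock level and the date-zero novice mass.
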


\paragraph{Proof sketch.} Before the affected cohort reaches seniority, equation~\eqref{eq:senior} continues to load on the pre-shock novice mass. After $L$ periods, the lower inflow enters the recursion, whose steady state and convergence rate follow directly. Appendix~\ref{app:formal} gives the full proof and the associated quality implications.

\begin{assumption}[Within-occupation seniority gradient]
\label{ass:seniority}
Entry-level work is more execution-intensive than senior work within an occupation, $e_{j,\mathrm{entry}}>e_{j,\mathrm{senior}}$.
\end{assumption}

Under Assumption~\ref{ass:seniority}, Proposition~\ref{prop:sorting} implies a larger near-term employment response for entry-level work at a given capability level. Equations~\eqref{eq:sV} and \eqref{eq:senior} add a distinct delayed prediction. Once post-shock cohorts reach seniority, both the number of evaluators and their average practice-produced skill may fall. Writing total evaluation capacity as $V_t^{\mathrm{tot}}=S_t\bar s_t^V$, the quantity effect operates through $S_t$ and the quality effect through $\bar s_t^V$.

The evidentiary status of these predictions differs. Proposition~\ref{prop:sorting} can be examined in current occupation data, but the estimated execution gradient predates generative AI and largely lies between occupational families, so it is not evidence that AI caused execution sorting. The near-term seniority implication can be examined only weakly in the CPS, because the outcome misses hiring and occupational entry; that test is null. The delayed senior-stock and evaluation-quality implications cannot be tested in a sample ending in 2025; they require post-cohort-lag data linking entry hiring and execution practice to later senior stocks and downstream evaluation performance. Only one of the two is distinctive. A delayed fall in the \emph{number} of senior workers follows from any model in which entry-level positions feed the senior stock, including the knowledge-hierarchy and apprenticeship accounts of \citet{garicano2000hierarchies}, \citet{garicano2025training}, and \citet{afrouzi2026automation}; a fall in evaluative skill \emph{per senior worker} follows specifically from the channel in which execution practice produces evaluative judgment. Appendix~\ref{app:formal} develops the full derivations and comparison with alternative mechanisms; Appendix~\ref{sec:estimation} reports an illustrative timing exercise, not an estimate.

\section{Data and Measurement}
\label{sec:data}

\subsection{Task-level execution and AI-capability shares}
\label{sec:ajmeasure}

I score all $19{,}265$ O*NET task statements \citep{onet} under two separate rubrics. The execution rubric assigns high scores to producing, doing, or performing---operating, writing, computing, building, drafting, or carrying out a procedure---and low scores to evaluating---assessing, verifying, auditing, approving, reviewing, or diagnosing. Mixed statements receive intermediate scores. The rubric explicitly separates this dimension from routineness: complex surgery is non-routine execution, while checking a document against a fixed standard is routine evaluation.

The capability rubric asks whether a frontier language model with typical tool access can perform the task as stated to a professional standard. It concerns technical feasibility, not whether deployment is lawful, licensed, accountable, or economical. Occupation titles are hidden in both scoring runs. For $k\in\{e,a\}$, the occupation score is
\begin{equation}
m_j^k=\frac{\sum_{t\in j}w_t m_t^k}{\sum_{t\in j}w_t},\qquad
w_t=\begin{cases}2&\text{core task},\\1&\text{supplemental task}.\end{cases}
\label{eq:measures}
\end{equation}
Here $m_t^e$ is the execution score and $m_t^a$ is the capability score. Equal, importance-based, and frequency-based task weights leave both the occupation scores and the six-year employment coefficient essentially unchanged; Appendix~\ref{app:measurement} reports those figures together with the batch-order and sample-flow details; Appendix~\ref{app:rubric} gives the prompts and worked examples. Because both shares are estimated rather than observed, every occupation characteristic used below is a generated regressor, and the standard errors reported throughout do not propagate scoring uncertainty; Appendix~\ref{app:measurement} quantifies its magnitude.

The baseline task universe is O*NET version 26.1 from November 2021, before the diffusion of generative AI. Rebuilding the execution share on version 30.3 from May 2026 produces an occupation-level correlation of $0.996$. The two capability versions correlate $0.998$. Task-description revisions after 2021 therefore do not account for the occupational rankings.

The capability judgment itself has a later vintage. The scoring model reflects the frontier in 2026, even though it evaluates 2021 task descriptions. I refer to this measure as the \emph{2026-vintage capability score}; it is not an ex ante exposure to the 2022 shock. I therefore use two external anchors. The early-2023 GPT-4 exposure measure of \citet{eloundou2024gpts} correlates $0.80$ with the constructed score among cognitive occupations and serves as the vintage-valid alternative. The Anthropic Economic Index's observed-usage exposure correlates $0.60$ across all matched occupations and $0.48$ within the cognitive sample. Because its conversations begin in December 2024, usage is contemporaneous with the 2022--2025 employment outcome and is not treated as predetermined adoption. \citet{massenkoff2026labor} build an occupational exposure measure from the same conversation data, weighting theoretical capability by observed and automation-oriented usage into a single index. The measures are related but not substitutable: collapsing capability and usage into one ranking is the efficient choice for forecasting displacement risk, whereas the question here requires them to stay separate, since the joint cells and the interaction test are defined only when execution content, capability, and usage are three distinct occupation characteristics scored on a common task universe.

\subsection{Employment, wages, seniority, and samples}

Employment and mean wages come from the national cross-industry Occupational Employment and Wage Statistics estimates \citep{bls_oews}. The six-year outcome is
\[
D_j=\ln(\mathrm{emp}_{j,2025})-\ln(\mathrm{emp}_{j,2019}).
\]
Both endpoints use the 2018 SOC system. OEWS pools three years of survey panels, so adjacent annual changes are smoothed and are not independent. The long-run extension uses May 2012 and May 2016 files under SOC-2010, stable crosswalk components through SOC-2018, and BLS research estimates that hold the post-2021 MB3 estimator constant where available.

The principal sample contains the ten white-collar SOC major groups: management; business and financial operations; computer and mathematical; architecture and engineering; life, physical, and social science; legal; education; arts and media; healthcare practitioners; and office and administrative support. The task data include $400$ such occupations. The 2019--2025 employment regressions contain $337$ cognitive occupations, or $284$ after routine intensity and interest-rate exposure are required. The post-2022 usage sample contains $364$ occupations, and the harmonized long-run panel contains $323$ cognitive occupation groups. Appendix~\ref{app:measurement} reports the complete sample flow.

Table~\ref{tab:desc} reports descriptive statistics for the estimation samples. The seniority analysis uses IPUMS CPS ASEC data from 2019 through 2025 \citep{ipumscps}. It compares labor-force participants aged 22--30 with those aged 35--64 and maps the task measures to Census 2010 occupations. The outcome is employment conditional on labor-force participation. It therefore misses labor-force exit and workers who never enter an occupation, which are important limits for interpreting the null.

\begin{table}[t]
\centering
\caption{Descriptive statistics}
\label{tab:desc}
\small
\begin{tabular}{lcccc}
\toprule
 & \multicolumn{2}{c}{All ($n=798$)} & \multicolumn{2}{c}{Cognitive ($n=400$)} \\
\cmidrule(lr){2-3}\cmidrule(lr){4-5}
 & Mean & SD & Mean & SD \\
\midrule
Execution share $e_j$    & 0.68 & 0.11 & 0.62 & 0.09 \\
AI-capable share $a_j$   & 0.36 & 0.19 & 0.49 & 0.14 \\
\bottomrule
\end{tabular}
\par\vspace{4pt}
\footnotesize\raggedright Note: the execution share is the language-model measure with full coverage. The cognitive subset is the ten white-collar SOC major groups (11, 13, 15, 17, 19, 23, 25, 27, 29, 43).
\end{table}

\subsection{Separability, routine intensity, and reproducibility}

Within cognitive occupations, execution share and the 2026-vintage capability score correlate $-0.35$ in the index data and $-0.34$ in the OEWS match. Execution share retains about $94$ percent of its standard deviation after capability is partialled out, and all four execution-capability quadrants are populated (Figure~\ref{fig:sep}). Financial examiners, underwriters, credit analysts, and personal financial advisers are relatively evaluation-intensive at high capability; programmers and mathematicians are more execution-intensive at comparable capability.

\begin{figure}[t]
\centering
\includegraphics[width=\textwidth]{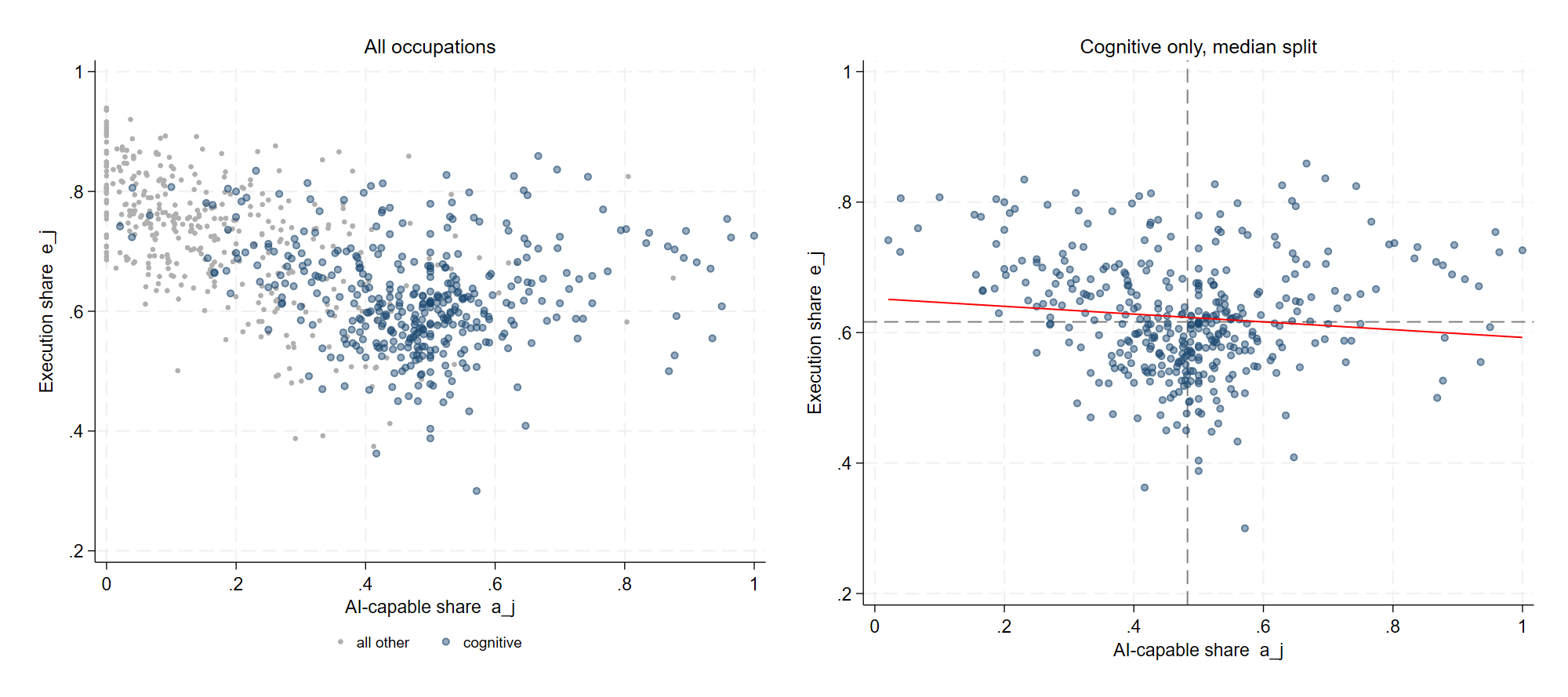}
\caption{Execution share $e_j$ and AI-capable share $a_j$. Left: all occupations, with cognitive occupations highlighted. Right: cognitive occupations with median splits and a linear fit. Within the cognitive sample, the measures correlate $-0.35$ and populate all four quadrants.}
\label{fig:sep}
\end{figure}

Execution share is positively but imperfectly related to routine-task intensity. The baseline routine index is the one carried through the estimation pipeline; the reconstruction follows \citet{acemoglu2011skills} exactly, combining routine cognitive and routine manual O*NET descriptors minus non-routine analytical, interpersonal, and manual descriptors, each standardized across six-digit occupations. The two correlate $0.89$. Replacing the baseline with the reconstructed measure attenuates the execution coefficient from $-0.70$ to $-0.45$ but leaves it statistically significant ($p=0.023$), an attenuation consistent with the reconstruction's non-routine components, which include evaluation-adjacent activities and therefore absorb part of the execution dimension. Within cognitive occupations, execution share and routine intensity correlate $0.43$, so routine content explains about one-fifth of execution-share variation. AI capability is essentially uncorrelated with routine intensity ($r=-0.07$).

The distinction is conceptual as well as statistical. Routine intensity classifies the procedure: whether it is repetitive, codifiable, and rule-bound. Execution share classifies the function: whether the worker produces an output or judges it. Oral surgeons, floral designers, mathematicians, and physicists are relatively non-routine executors. Proofreaders, sports officials, accountants, and underwriters are relatively routine evaluators. Figure~\ref{fig:ejrti} plots the two measures against one another. Appendix~\ref{app:measurement} lists these off-diagonal occupations and reports window-specific routine controls.

\begin{figure}[t]
\centering
\includegraphics[width=0.82\textwidth]{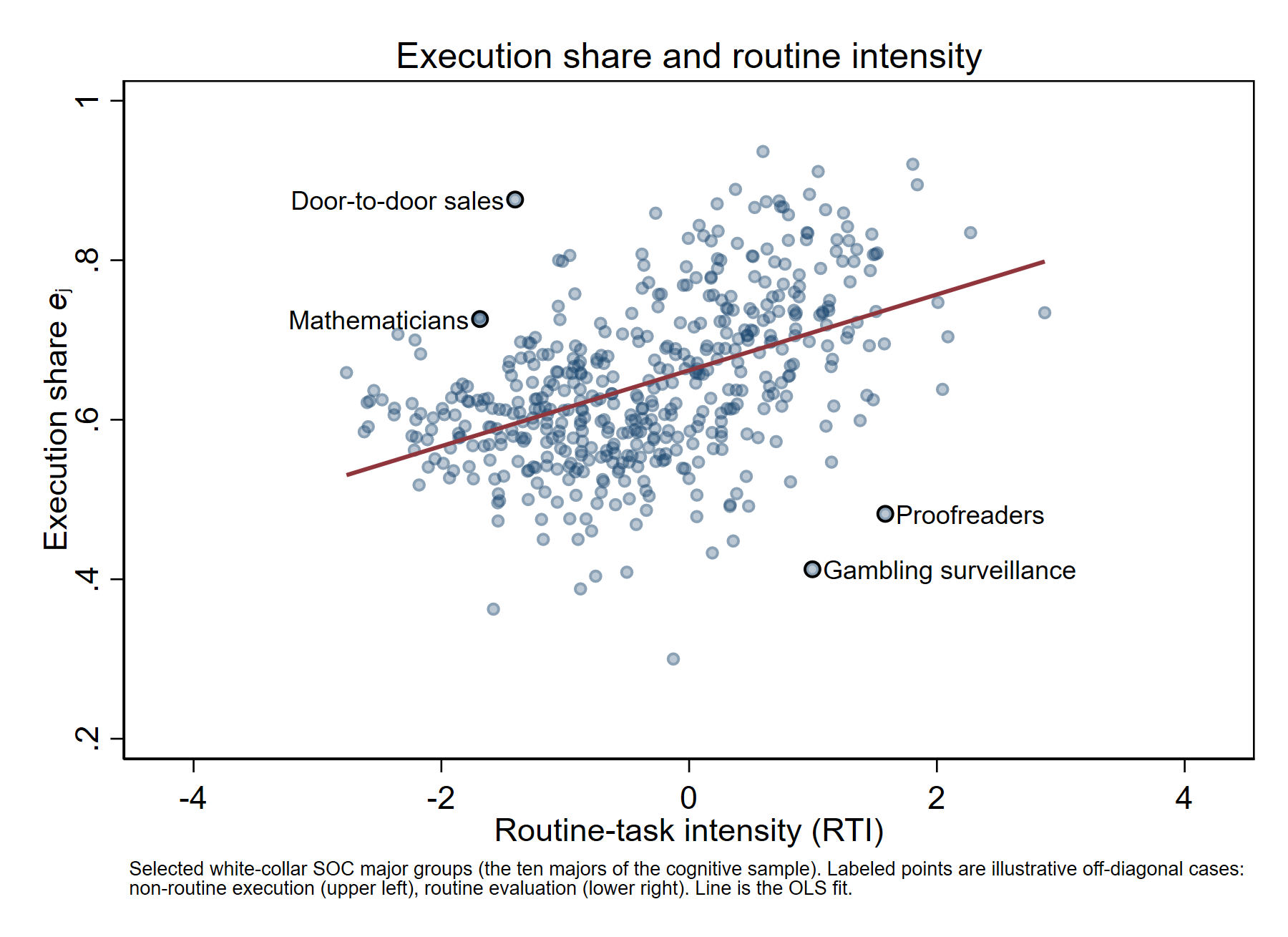}
\caption{Execution share and routine-task intensity in cognitive occupations. The relationship is positive but imperfect. Labeled observations illustrate non-routine execution (upper left) and routine evaluation (lower right).}
\label{fig:ejrti}
\end{figure}

A task-level audit provides a separate reproducibility check. Two independent model coders, blind to occupations, production scores, and each other, scored a stratified sample of $500$ statements under different framings. Their scores correlate $0.991$. Their consensus correlates $0.924$ with the production score, with ICC$(A,1)=0.908$, mean absolute error of $10.1$ points on the $0$--$100$ scale, and no systematic failure class across drafting, reviewing, diagnostic, and supervisory tasks. Because all coders are language models, this is evidence of cross-model reproducibility rather than human construct validation.

\subsection{What the task-level measure adds beyond O*NET evaluation activities}
\label{sec:onetbenchmark}

O*NET already provides occupation-level ratings for evaluating information against standards, judging qualities, monitoring, inspecting, guiding subordinates, coaching, and consultation. These independently collected activities are the natural benchmark. Regressing execution share on an evaluation-activity composite, a rebuilt routine index, and a manual-activity composite produces a coefficient of $-0.058$ (robust $t=-10.5$) across $759$ matched occupations and $-0.045$ ($t=-6.4$) within cognitive occupations. The corresponding $R^2$ values are $0.45$ and $0.33$. This strong negative relation is convergent validity, not evidence that the execution score recovers an entirely new latent trait. Table~\ref{tab:onetcompare} sets out what the task-level measure adds beyond the O*NET activities and what it does not.

\begin{table}[t]
\centering
\caption{What the task-level execution share adds to O*NET evaluation activities}
\label{tab:onetcompare}
\small
\begin{tabular}{@{}>{\raggedright\arraybackslash}p{0.23\linewidth}>{\raggedright\arraybackslash}p{0.35\linewidth}>{\raggedright\arraybackslash}p{0.35\linewidth}@{}}
\toprule
 & Task-level execution share & O*NET evaluation-activity composite \\
\midrule
Unit of observation & Individual O*NET task statements, aggregated to occupations & Occupation-level Work Activity ratings \\
Alignment with AI capability & Same statements, weights, and occupational aggregation as the capability score & Separate descriptors and rating process \\
Joint execution--capability mapping & Available at the task and occupation level & Not available from the composite alone \\
Portability & Fixed rubric can be applied to other task corpora and O*NET vintages & Tied to the selected O*NET activity items \\
Validation status & Model-generated and cross-model reproducible; not human validated & Independently collected human O*NET ratings \\
Incremental 2019--2025 fit & $\Delta R^2=0.006$ over the composite alone; $0.011$ in the full controlled comparison & Strong independent predictor; $\Delta R^2=0.058$ in the full controlled comparison \\
\bottomrule
\end{tabular}
\par\vspace{4pt}
\footnotesize\raggedright Note: the final row reports the comparisons described in the text. The execution share's main incremental value is task-level alignment with AI capability and portability, not superior predictive fit relative to the existing O*NET activity composite.
\end{table}

The employment comparison reinforces that interpretation. In the common cognitive sample ($n=335$), execution share and the evaluation composite correlate $-0.35$. Employment change correlates $-0.22$ with execution share and $+0.41$ with the composite. The near-identical composite coefficients when entered alone and conditional on execution share --- $+0.099$ and $+0.092$ per standard deviation ($t=7.8$) --- confirm that the composite's positive sign is a genuine association rather than an artifact of suppression by the execution measure. Conditional on the composite alone, the execution coefficient falls to $-0.020$ per standard deviation ($t=-1.5$) and adds $\Delta R^2=0.006$.

In the full controlled employment specification ($n=284$), execution share retains a coefficient of $-0.376$ ($t=-1.95$) beside the composite's $+0.104$ ($t=4.9$). Its incremental $R^2$ is $0.011$, compared with $0.058$ for the composite. Thus evaluation-intensive occupations grew over 2019--2025, and the task-level execution share contributes a modest, control-dependent increment. Its distinctive contribution is the common task-level architecture: it can be paired directly with capability scores, decomposed into joint cells, rebuilt on alternative vintages, and transported to task databases that do not contain O*NET Work Activity ratings.

\section{Employment Facts}
\label{sec:results}

\subsection{The long-run gradients, 2012--2025}
\label{sec:timing}

The chronology is the central empirical result. A 2019--2022 ``pre-period'' spans the pandemic and is not a clean baseline for generative AI. I therefore harmonize OEWS employment back to 2012, retaining only crosswalk components with stable reporting membership. The balanced panel contains $323$ cognitive occupation groups. Of the $669$ groups in the full panel, $661$ are one-to-one code matches and eight are consistent two-to-one mergers from SOC-2010 to SOC-2018. Excluding the mergers leaves the results essentially unchanged.

\begin{table}[H]
\centering
\caption{Pre-pandemic extension: annualized window slopes on the harmonized 2012--2025 panel}
\label{tab:pretrend}
\small
\begin{tabular}{lcccc}
\toprule
Annualized $\Delta\ln$ emp.\ on: & 2012--16 & 2016--19 & 2019--22 & 2022--25 \\
\midrule
\multicolumn{5}{l}{\emph{Panel A. Estimator-consistent (MB3 research estimates for 2016 and 2019)}} \\
Execution share (univariate) &  & $-0.139^{***}$ & $-0.095^{***}$ & $-0.136^{***}$ \\
                             &  & $(0.029)$      & $(0.030)$      & $(0.034)$ \\
Execution share (joint)      &  & $-0.151^{***}$ & $-0.085^{**}$  & $-0.116^{***}$ \\
                             &  & $(0.032)$      & $(0.033)$      & $(0.037)$ \\
Eloundou capability (joint)  &  & $-0.056^{***}$ & $-0.035^{**}$  & $-0.089^{***}$ \\
                             &  & $(0.020)$      & $(0.018)$      & $(0.017)$ \\
\addlinespace
\multicolumn{5}{l}{\emph{Panel B. Published vintages (old method through 2019; MB3 from 2022)}} \\
Execution share (univariate) & $-0.088^{***}$ & $-0.136^{***}$ & $-0.103^{***}$ & $-0.136^{***}$ \\
                             & $(0.025)$      & $(0.026)$      & $(0.033)$      & $(0.034)$ \\
Execution share (joint)      & $-0.105^{***}$ & $-0.125^{***}$ & $-0.100^{***}$ & $-0.116^{***}$ \\
                             & $(0.027)$      & $(0.025)$      & $(0.036)$      & $(0.037)$ \\
Eloundou capability (joint)  & $-0.045^{***}$ & $-0.072^{***}$ & $-0.054^{***}$ & $-0.089^{***}$ \\
                             & $(0.015)$      & $(0.018)$      & $(0.018)$      & $(0.017)$ \\
\bottomrule
\end{tabular}
\par\vspace{4pt}
\footnotesize\raggedright Note: OLS on $323$ cognitive occupation groups observed in all five OEWS vintages. The stable crosswalk contains $661$ one-to-one code matches and eight consistent two-to-one 2010-to-2018 mergers; components whose membership changes are excluded. The restriction drops $30$ cognitive occupations ($4.4$ percent of 2019 cognitive employment), whose mean execution share and Eloundou exposure are close to those of the retained sample.
\par\smallskip Outcomes are annualized log employment changes; heteroskedasticity-robust standard errors are in parentheses, and joint rows include routine intensity. Panel A uses BLS MB3 research estimates for May 2016 and May 2019, making the 2016--2019, 2019--2022, and 2022--2025 windows estimator-consistent. Equality of the execution slopes is not rejected in the univariate ($p=0.41$) or joint ($p=0.20$) specification.
\par\smallskip Panel B uses published vintages, so 2019--2022 spans the estimator change; its four-window execution equality test gives $p=0.21$, with window-minus-2012--16 differences carrying $95$ percent confidence intervals $[-0.096,\ 0.000]$, $[-0.086,\ 0.056]$, and $[-0.113,\ 0.018]$ and a minimum detectable slope difference of about $0.089$ per year. On Panel A, equality of the vintage-valid capability slopes is rejected (joint $p=0.020$; univariate $p=0.013$), driven by 2022--2025 relative to 2019--2022. The execution result is nonrejection of equality, not proof of constancy. $^{+}\,p<0.10$, $^{**}\,p<0.05$, $^{***}\,p<0.01$.
\end{table}

Table~\ref{tab:pretrend} reports annualized window slopes on this constant sample. The execution gradient is negative in every window. In the published-vintage panel, the univariate annualized execution slopes are $-0.088$ in 2012--2016, $-0.136$ in 2016--2019, $-0.103$ in 2019--2022, and $-0.136$ in 2022--2025. Equality across the four windows is not rejected ($p=0.21$). Because the May 2021 release introduced the MB3 estimator, Panel A holds the estimator constant using BLS research estimates for 2016 and 2019. The corresponding slopes are $-0.139$, $-0.095$, and $-0.136$, and equality is again not rejected ($p=0.41$ univariately; $p=0.20$ jointly). These are nonrejections rather than proof of constancy, and the power behind them should be stated: the minimum detectable slope difference is about $0.089$ per year against slopes of roughly $0.1$, so the design can reject only large changes in slope, and the execution slopes are estimated less precisely than the capability slopes reported below (standard errors of $0.029$ to $0.034$ against $0.017$ to $0.020$). What the two clean pre-pandemic windows do establish, independently of any equality test, is that the execution gradient predates generative AI.

The vintage-valid capability measure is also significantly negative before 2022. Its estimator-consistent joint slopes are $-0.056$ in 2016--2019, $-0.035$ in 2019--2022, and $-0.089$ in 2022--2025. Unlike the execution slopes, equality is rejected ($p=0.020$). The 2022--2025 slope is $0.054$ per year more negative than the pandemic-window slope (s.e. $0.021$, $p=0.009$), although it is not significantly different from 2016--2019. The supported statement is therefore asymmetric: the execution gradient shows no detected break, while the capability gradient is longstanding but steepens after 2022 relative to the pandemic window. The design dates that change; it does not identify AI as its cause.

The panel is an intensive-margin design: each observation is an occupation present at both endpoints, so the slopes are silent on occupations entering or leaving the measured universe. That margin is almost entirely entry here --- $51$ cognitive occupations appear by 2022--2024 with no 2019 baseline, against a single exit --- and the entrants are statistically indistinguishable from the balanced set on both characteristics (capability share $0.49$ against $0.49$; execution share $0.61$ against $0.63$), so the estimates are not attenuated by the hardest-hit occupations falling below the reporting threshold. Restoring them moves the univariate 2022--2024 slope on the 2026-vintage capability score from $-0.14$ to $-0.11$ (Appendix Table~\ref{tab:dissoc}). Much of the entry is a taxonomy change rather than occupation birth: data scientists, database architects, and the software-developer split are 2018-SOC codes that OEWS phased into separate estimation after the May 2019 file. Appendix~\ref{app:employmentrobust} reports the three-window decomposition for the retrospective 2026 score, which shows a suggestive window pattern but no rejected break, together with the broader sample that adds in-person service occupations and the associated sample-definition sensitivity. Those exercises reinforce two cautions: significance in one smoothed window and not another is not a test of slope equality, and any 2019--2022 comparison is also a pandemic comparison.

\subsection{The 2019--2025 cross-section and occupational families}

The six-year cross-section summarizes which occupations grew over the full period. Table~\ref{tab:reg} estimates equation~\eqref{eq:reg} without the interaction, which is imprecise. Within cognitive occupations, both execution share and capability are negatively associated with employment change. Controlling for routine intensity and interest-rate exposure yields coefficients of $-0.701$ for execution and $-0.468$ for capability. The interest-rate control is included because \citet{iscenko2026ladder} argues that AI-exposed occupations are concentrated in rate-sensitive sectors; it enters insignificantly ($-0.025$, s.e.\ $0.061$), so that channel does not drive the occupation-level result. Column 1 reports the pooled all-occupation estimate for reference, in which capability carries the opposite sign, a reminder that the white-collar restriction is part of the specification rather than a neutral preliminary.

\begin{table}[H]
\centering
\caption{Execution share and the 2019--2025 employment change, language-model measure}
\label{tab:reg}
\small
\begin{tabular}{lccc}
\toprule
 & (1) & (2) & (3) \\
 & $\Delta\ln$ emp & $\Delta\ln$ emp & $\Delta\ln$ emp \\
 & all & cognitive & cognitive \\
 & baseline & baseline & +controls \\
\midrule
AI-capable share $a_j$ & $\phantom{-}0.017$ & $-0.499^{***}$ & $-0.468^{***}$ \\
                       & $(0.081)$ & $(0.110)$ & $(0.124)$ \\
Execution share $e_j$  & $-0.692^{***}$ & $-0.841^{***}$ & $-0.701^{***}$ \\
                       & $(0.139)$ & $(0.171)$ & $(0.192)$ \\
Routine intensity      &  &  & $-0.051^{***}$ \\
                       &  &  & $(0.016)$ \\
Rate exposure          &  &  & $-0.025$ \\
                       &  &  & $(0.061)$ \\
\midrule
Observations           & 708 & 337 & 284 \\
\bottomrule
\end{tabular}
\par\vspace{4pt}
\footnotesize\raggedright Note: OLS, robust standard errors in parentheses. The execution share is the language-model measure with full coverage. Outcome is the log employment change from May 2019 to May 2025. Routine intensity is the standardized routine-minus-abstract O*NET index; rate exposure is the share of private employment in interest-rate-sensitive sectors. Column-3 variance inflation factors are all below 1.5. The wage-change counterpart of column 3 is Table~\ref{tab:wage}. $^{+}\,p<0.10$, $^{**}\,p<0.05$, $^{***}\,p<0.01$.
\end{table}

The key limitation is that these gradients are primarily between broad occupational families. Adding two-digit SOC major-group fixed effects reduces the execution coefficient from about $-0.70$ to $-0.14$ and removes its significance; the capability coefficient also becomes insignificant. Thus the main six-year association compares families such as office support, management, education, health, and computing rather than execution-heavy and evaluation-heavy occupations within the same family. Permanent differences in education, regulation, industry composition, remote-work feasibility, offshorability, and secular demand can generate that pattern.

The controlled execution coefficient nevertheless survives two narrower comparisons. Replacing the baseline routine index with the rebuilt \citet{acemoglu2011skills} measure gives $-0.45$ ($p=0.023$). Replacing the 2026 capability score with the vintage-valid Eloundou exposure leaves execution at $-0.738$ and gives capability at $-0.397$ ($t=-4.3$), estimated on the $275$ of the $284$ controlled-specification occupations that carry an Eloundou score. The vintage-valid measure is earlier than the 2026 score but is not predetermined either: it was released in 2023 and therefore postdates the start of the 2019--2022 window in which its slope is reported. When both capability vintages enter, the 2026 score becomes insignificant while the earlier exposure remains negative, showing that the full-period capability gradient belongs to their common component rather than to information unique to the later scoring model.

Because the variation is concentrated across roughly ten occupational families, I also use family-level inference. Clustering on the ten SOC major groups and imposing the null, a wild-cluster bootstrap with Rademacher weights and full enumeration gives $p=0.029$ for execution and $p=0.041$ for the 2026 capability score in the controlled specification; in the vintage-valid specification the corresponding values are $0.006$ and $0.012$. Employment weighting, precision weighting, trimming small occupations, and leave-one-family-out estimates preserve the sign of the execution coefficient. These robustness results are reported in Table~\ref{tab:weights} and Appendix~\ref{app:employmentrobust}. They show that the between-family pattern is not generated by one small or unusual occupation, but they do not substitute for the null within-family fixed-effects result. A common-sample check separates the role of the controls from the loss of observations across columns of Table~\ref{tab:reg}: on the $284$ occupations of the controlled specification, the uncontrolled coefficients are $-0.94$ for execution and $-0.51$ for capability, so the movement to $-0.70$ and $-0.47$ reflects the controls rather than the change in sample.

The negative execution coefficient is equivalently a positive association for evaluation-intensive occupational families because the two shares sum to one. Rising demand for health, education, management, and regulated professional services is therefore an alternative description of the same six-year fact. The O*NET evaluation-activity composite points in the same direction and has stronger predictive content, as Section~\ref{sec:onetbenchmark} documents.

\subsection{Post-2022 concentration in observed usage and execution}
\label{sec:usage}

Observed usage helps locate the recent decline but does not provide predetermined adoption variation. The Anthropic measure is based on conversations from late 2024 onward, inside the 2022--2025 outcome window, and usage on one platform is not employer adoption. Employment changes can influence usage, and common demand shocks can influence both.

\begin{table}[H]
\centering
\caption{Observed AI usage and the post-2022 employment change, cognitive occupations}
\label{tab:usage}
\small
\setlength{\tabcolsep}{4pt}
\begin{tabular}{lccccc}
\toprule
 & (1) & (2) & (3) & (4) & (5) \\
Dep.\ var.: $\Delta\ln$ emp.\ 2022--2025 & usage vs.\ cap. & interaction & cells & cells+controls & cells+group FE \\
\midrule
Observed usage (SD)                  & $-0.026^{***}$ & $-0.024^{**}$  &  &  &          \\
                                     & $(0.008)$      & $(0.009)$      &  &  &          \\
AI-capable share (SD in col.\ 1)     & $-0.008$       & $-0.121^{+}$   &  & $-0.124^{+}$ &          \\
                                     & $(0.013)$      & $(0.071)$      &  & $(0.069)$ &          \\
Execution share (SD)                 &                & $-0.028^{***}$ &  &  &          \\
                                     &                & $(0.010)$      &  &  &          \\
Execution $\times$ usage             &                & $-0.011$       &  &  &          \\
                                     &                & $(0.011)$      &  &  &          \\
Routine intensity                    &                & $-0.025^{***}$ &  & $-0.030^{***}$ &          \\
                                     &                & $(0.007)$      &  & $(0.007)$ &          \\
High execution, low usage            &                &                & $-0.035^{+}$ & $-0.024$ & $-0.005$ \\
                                     &                &                & $(0.020)$ & $(0.022)$ & $(0.023)$ \\
Low execution, high usage            &                &                & $-0.032$ & $-0.017$ & $-0.024$ \\
                                     &                &                & $(0.021)$ & $(0.021)$ & $(0.020)$ \\
High execution, high usage           &                &                & $-0.115^{***}$ & $-0.084^{***}$ & $-0.056^{***}$ \\
                                     &                &                & $(0.022)$ & $(0.020)$ & $(0.021)$ \\
\midrule
SOC major-group FE                   & no & no & no & no & yes \\
Observations                         & 364            & 364            & 364 & 364 & 364 \\
\bottomrule
\end{tabular}
\par\vspace{4pt}
\footnotesize\raggedright Note: OLS on $364$ cognitive occupations with 2022 and 2025 employment and an Anthropic Economic Index usage measure; robust standard errors are in parentheses. Usage is standardized within the cognitive sample, as is execution share in column 2. Column 1 jointly includes usage and the standardized 2026-vintage capability score; the vintage-valid Eloundou alternative is reported in the text.
\par\smallskip Usage comes from \path{labor_market_impacts/job_exposure.csv} in the Hugging Face \path{Anthropic/EconomicIndex} dataset (added February 2026; accessed July 2026). The underlying conversations begin in December 2024, inside the outcome window, so usage is contemporaneous with the outcome.
\par\smallskip Column 2 adds execution share, the execution-by-usage interaction, capability, and routine intensity. Columns 3--5 use median-split cells, with the low-low cell omitted (raw means: $+6.9$, $+3.7$, $+3.4$, and $-4.6$ percent; $n=84$, $98$, $97$, and $85$). Column 4 adds capability and routine intensity; column 5 adds SOC major-group fixed effects. The super-additivity contrast is $-0.047$ ($p=0.14$), $-0.043$ ($p=0.16$), and $-0.026$ ($p=0.38$) in columns 3--5. $^{+}\,p<0.10$, $^{**}\,p<0.05$, $^{***}\,p<0.01$.
\end{table}

Table~\ref{tab:usage} reports the estimates. Column 1 compares usage with the 2026-vintage capability score. A one-standard-deviation increase in usage predicts a $2.6$ percent larger decline ($t=-3.1$), while the later capability score is insignificant conditional on usage. That comparison reverses when the earlier Eloundou exposure is used: usage weakens to $-1.8$ percent ($p=0.089$), and vintage-valid capability is $-2.6$ percent ($t=-3.1$). Usage and the earlier capability measure point to nearly the same occupations and cannot be separated as correlates.

Column 2 tests the continuous execution-by-usage interaction. Execution and usage each predict contraction conditional on the other, but the interaction is negative and insignificant ($-0.011$, $p=0.34$). Columns 3--5 report median-split cells. Relative to occupations low in both dimensions, the two off-diagonal cells decline by an insignificant $3$ to $3.5$ percent, while the high-high cell declines by $11.5$ percent ($t=-5.3$). In raw means, it is the only cell with an absolute decline. Adding capability, routine intensity, and major-group fixed effects reduces the high-high coefficient to $-0.056$ but leaves it significant ($p=0.009$). What survives the family fixed effects is that cell specifically, not an execution gradient within families: on the continuous margin in this window, usage remains negative within major groups ($-0.024$, $t=-2.5$) while the execution main effect does not ($-0.009$, insignificant).

The cell pattern does not establish super-additivity. The formal double difference $\beta_{HH}-\beta_{HL}-\beta_{LH}$ is $-0.047$ ($p=0.14$) in the baseline cells and remains insignificant with controls and family fixed effects. Appendix~\ref{app:employmentrobust} reports the 40th- and 60th-percentile splits, terciles, quartiles, the two-dimensional surface, and a family-wise wild-cluster test around a flexible additive fit. The cluster-adjusted additive-null test gives $p=0.202$. Studentizing at the occupation level rather than at the level of the resampled clusters gives $p=0.065$; because the resampling is at the family level, the cluster-adjusted value is the appropriate one, and the smaller figure is reported here so that the choice is visible. The robust statement is therefore that the decline is concentrated in the high-usage, high-execution corner and is compatible with additive negative associations.

\subsection{Wages and seniority}
\label{sec:wagemargin}

The wage margin complicates a simple labor-demand interpretation. Replacing employment change in the controlled specification with the change in mean occupational wages produces a positive execution coefficient of $0.102$ (Table~\ref{tab:wage}). Execution-intensive occupations have lower employment growth but higher relative mean-wage growth.

\begin{table}[H]
\centering
\caption{The wage margin: log mean-wage change, cognitive occupations}
\label{tab:wage}
\small
\begin{tabular}{lc}
\toprule
Dep.\ var.: $\Delta\ln$ mean wage 2019--2025 & (1) \\
\midrule
AI-capable share $a_j$ & $-0.029$ \\
                       & $(0.029)$ \\
Execution share $e_j$  & $\phantom{-}0.102^{**}$ \\
                       & $(0.045)$ \\
Routine intensity      & $\phantom{-}0.010^{**}$ \\
                       & $(0.005)$ \\
Rate exposure          & $-0.003$ \\
                       & $(0.014)$ \\
\midrule
Observations           & 281 \\
\bottomrule
\end{tabular}
\par\vspace{4pt}
\footnotesize\raggedright Note: the specification of Table~\ref{tab:reg} column 3 with the log mean-wage change as outcome, OLS with robust standard errors in parentheses. $^{+}\,p<0.10$, $^{**}\,p<0.05$, $^{***}\,p<0.01$.
\end{table}

Several mechanisms can generate this combination: lower-paid execution workers may exit first, remaining work may be upgraded, scarce human execution may command a premium, or within-occupation task composition may change. OEWS mean wages cannot distinguish a price change for a fixed worker from changes in who remains employed. The employment coefficients therefore describe net occupational change rather than an identified inward demand shift.

The person-level CPS seniority test is null. The entry-age-by-post-2022 interaction is small, insignificant, and slightly positive for execution share, capability, and both jointly. The result is unchanged by occupation, year, occupation-by-entry, and entry-by-year fixed effects. Its standard error implies a minimum detectable effect of roughly $0.8$ percentage points on an outcome with mean $0.976$. Because the sample conditions on labor-force participation and observes neither hiring nor failure to enter an occupation, the null rules out large effects on this narrow margin but does not test the entry-flow prediction directly. Appendix~\ref{app:cps} reports the full specifications.

\subsection{Findings and limits}

Table~\ref{tab:status} collects the empirical objects above with the interpretation each one supports, separating what the data establish from what remains a stated limit.

\begin{table}[H]
\centering
\caption{Empirical findings and their interpretation}
\label{tab:status}
\small
\begin{tabular}{@{}>{\raggedright\arraybackslash}p{0.43\linewidth}>{\raggedright\arraybackslash}p{0.49\linewidth}@{}}
\toprule
Empirical object & Result and interpretation \\
\midrule
Execution share as a task-level measure & Reproducible across model coders; distinct from capability and routine intensity; not human validated \\
Increment beyond O*NET evaluation activities & Modest and control-dependent; the main value is common task-level architecture and portability \\
Execution-employment gradient & Present before generative AI; no detected break; largely between occupational families \\
Vintage-valid capability gradient & Negative before 2022 and significantly steeper in 2022--2025 than in the pandemic window; dated, not causal \\
Usage--execution concentration & High-high cell contracted; continuous and family-adjusted super-additivity tests are insignificant \\
Wage margin & Higher relative mean-wage growth despite lower employment growth, inconsistent with a simple one-margin demand interpretation \\
Seniority margin & CPS null on employment conditional on labor-force participation; hiring and occupational entry remain unobserved \\
\bottomrule
\end{tabular}
\end{table}

\section{Discussion and Conclusion}
\label{sec:conclusion}
\label{sec:agenda}

This paper separates two questions that are often combined in measures of technological exposure. Can AI perform an occupation's tasks? And does the human role consist primarily of producing output or evaluating it? Scoring both dimensions on the same $19{,}265$ O*NET task statements yields a common task-level architecture for answering them.

The comparison with existing O*NET activities is central to the contribution. The human-rated evaluation composite is strongly related to the execution score and predicts employment growth more strongly. The execution share adds modest, control-dependent explanatory power. Its value is not that it dominates O*NET's existing descriptors. It is that the measure is defined at the task-statement level, can be paired directly with capability on the same universe and weights, can be reconstructed across vintages, and can be applied to other task corpora. That architecture makes the execution-evaluation distinction available for future designs with adoption, hiring, worker, or firm variation.

The labor-market facts are correspondingly disciplined, and Table~\ref{tab:status} pairs each with the interpretation it supports. Two of them carry the paper. The execution gradient is a long-run occupational trend: it predates generative AI, shows no detected break, and is largely between broad occupational families. Beside it sits an asymmetry --- the vintage-valid capability gradient is the one place where the post-2022 period is statistically distinguishable, and even there the design dates the change without attributing it.

These findings do not identify an AI-caused execution effect. Stronger evidence would require predetermined or plausibly exogenous adoption, hiring and vacancy flows rather than smoothed occupation stocks, and worker-level data that distinguish tenure, occupational entry, and labor-force exit. Human coding of a population-weighted task sample would also provide a validation layer that model-to-model reproducibility cannot supply. The task-level design makes each of these extensions feasible because the execution measure can be attached to occupations, firms, postings, or worker histories without changing its conceptual definition.

The longer-run pipeline interpretation remains a hypothesis rather than a result. If execution practice produces evaluative judgment, removing the entry rung may affect the future supply and quality of evaluators after a cohort lag. The current sample ends before that prediction can be observed. The appropriate test is not another contemporaneous cross-section but a dated sequence: entry hiring, accumulated practice, later senior stocks, and downstream evaluation performance. Until such evidence is available, the defensible contribution is the measurement framework and the documented chronology of occupational labor demand.

\bibliographystyle{aer}
\bibliography{reference}

\clearpage
\appendix
\section{Detailed Derivations, Quantitative Extension, and Comparative Predictions}
\label{app:formal}
\label{sec:theory}

\subsection{CES unit cost and capability response}

Section~\ref{sec:framework} states the execution technology in equation~\eqref{eq:Y}. This appendix supplies the cost-minimization details. Frontier capability $A_t$ is an efficiency shifter on the AI input, so a rise in $A_t$ is equivalent to a fall in the effective AI price $q/(\theta_j A_t)$. This keeps frontier capability, role exposure, and the purchased quantity of AI input distinct.

Cost minimization for one unit of execution output gives, by Shephard's lemma,
\begin{equation}
s_j^*=\alpha^\sigma\left(\frac{c_j}{w}\right)^\sigma,
\qquad
c_j=\left[\alpha^\sigma w^{1-\sigma}+(1-\alpha)^\sigma\left(\frac{q}{\theta_j A_t}\right)^{1-\sigma}\right]^{1/(1-\sigma)},
\label{eq:sstar}
\end{equation}
where $c_j$ is the CES unit cost of execution. Define
\[
\psi_j^E=\frac{\kappa p_j^{\sigma-1}}{1+\kappa p_j^{\sigma-1}},
\qquad
\kappa=\left(\frac{\alpha}{1-\alpha}\right)^\sigma,
\qquad
p_j=\frac{q}{w\theta_j A_t}.
\]
Then $\psi_j^E$ is the human cost share of execution, and differentiating equation~\eqref{eq:sstar} gives equation~\eqref{eq:psi}. The conditional human labor quantity falls as capability rises for any $\sigma>0$. The restriction $\sigma>1$ adds that the human cost share falls and makes it more plausible that substitution dominates output expansion.

The labor quantity $s_j^*$ should not be confused with the cost share $\psi_j^E$. Equation~\eqref{eq:H} aggregates labor quantities across execution and evaluation. The cost share enters only through the capability elasticity.

\subsection{Proof and interpretation of Proposition~\ref{prop:sorting}}

\begin{proof}[Proof of Proposition~\ref{prop:sorting}]
From equation~\eqref{eq:psi}, $\partial s_j^*/\partial A_t<0$, with proportional magnitude governed by $\theta_j$ and hence by $a_j$. From equation~\eqref{eq:H},
\[
\frac{\partial H_j}{\partial A_t}=e_j\frac{\partial s_j^*}{\partial A_t},
\]
which is strictly more negative as $e_j$ rises at fixed $a_j$. For the proportional decline,
\[
\frac{\partial\ln H_j}{\partial A_t}=\underbrace{\frac{e_j s_j^*}{e_j s_j^*+(1-e_j)\underline r}}_{f_j}\frac{\partial\ln s_j^*}{\partial A_t}.
\]
Because $\underline r>0$, $f_j$ is strictly increasing in $e_j$. The second term is negative, so the magnitude of the proportional decline also rises with $e_j$.
\end{proof}

Proposition~\ref{prop:sorting} compares the model with a capability-only exposure measure. Such a measure orders tasks by automatability and contains no independent $e_j$. It therefore cannot generate an execution-share coefficient after capability is held fixed unless an additional task attribute is introduced. This does not distinguish the framework from the full class of task models, which can incorporate task-specific elasticities, regulation, complementarities, or evaluation tasks directly.

The product form in equation~\eqref{eq:Dstar} explains why the interaction is more structural than the additive capability coefficient. Conditional on output, higher capability lowers human execution demand. In the observed occupation stock, output expansion and evaluation complementarity can offset that effect. Capability also operates only through the execution component and vanishes as $e_j\rightarrow0$. The framework therefore signs $\beta_e$ and $\beta_{ea}$ in equation~\eqref{eq:reg}, while leaving the reduced-form sign of $\beta_a$ to the data.

The interaction is difficult to estimate in a single cross-section. Within cognitive occupations, $e_j$ and $a_j$ are moderately correlated, and their product is highly collinear with the levels. The interaction and super-additivity tests enter with the predicted sign but are imprecise. Hiring or vacancy flows with predetermined adoption variation would provide a cleaner test of equation~\eqref{eq:Dstar}.

Two maintained restrictions in equation~\eqref{eq:H} deserve emphasis. The oversight residual $\underline r$ is common across occupations, although licensing, liability, regulation, certification, and organizational design are likely to make it vary. Role-level output is also held fixed, although cheaper execution may expand output and raise demand for both functions. The employment coefficients are therefore net occupational changes rather than pure conditional displacement.

\subsection{Proof and extended implications of Proposition~\ref{prop:pipeline}}

\begin{proof}[Proof of Proposition~\ref{prop:pipeline}]
The immediate entry effect follows from $N_t=\nu(c_t^N)$ with $\nu$ increasing. For the first $L$ periods, equation~\eqref{eq:senior} continues to depend on the pre-shock novice mass, so the path of $S_t$ is the same as under the no-shock counterfactual. Once the lower inflow reaches the recursion, $S_{t+1}=(1-\delta_S)S_t+\mu N_\infty$. This linear difference equation converges to $S_\infty=\mu N_\infty/\delta_S<\mu N_0/\delta_S$ at rate $1-\delta_S$.
\end{proof}

Assumption~\ref{ass:seniority} cannot be inferred from the occupation-level execution share because the score does not vary by seniority within an occupation. A direct test would require occupation-by-seniority task descriptions from job postings, O*NET experience and on-the-job-training zones, or administrative seniority labels.

\begin{corollary}[Career-stage gradient]
\label{cor:gradient}
Under Assumption~\ref{ass:seniority}, entry-level displacement exceeds senior displacement at a given AI-capable share, and the execution gradient is steeper at entry.
\end{corollary}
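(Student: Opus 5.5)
The plan is to apply Proposition~\ref{prop:sorting} to a split of occupation $j$ into two roles, the entry role and the senior role, rather than comparing distinct occupations. I treat each career stage as a role in the sense of equation~\eqref{eq:H}: $H_{j,g}(A_t)=e_{j,g}\,s_j^*(A_t)+(1-e_{j,g})\underline r$ for $g\in\{\mathrm{entry},\mathrm{senior}\}$. Both stages inherit the occupation's capability share $a_j$, and hence the same $\theta_j$ and the same $s_j^*$ from equation~\eqref{eq:sstar}. They also share the same oversight residual $\underline r$ from Assumption~\ref{ass:residual}. This is the natural reading of ``at a given AI-capable share.'' Under it, the two stages differ only in their execution share, and Assumption~\ref{ass:seniority} orders those shares.

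The first claim, that entry-level displacement exceeds senior displacement, follows directly from the proof of Proposition~\ref{prop:sorting}.
\begin{itemize}
\item In levels, $\partial H_{j,g}/\partial A_t=e_{j,g}\,\partial s_j^*/\partial A_t$. Since $\partial s_j^*/\partial A_t<0$ by equation~\eqref{eq:psi}, and $e_{j,\mathrm{entry}}>e_{j,\mathrm{senior}}$, the entry derivative is strictly more negative.
\item In proportions, $\partial\ln H_{j,g}/\partial A_t=f(e_{j,g})\,\partial\ln s_j^*/\partial A_t$. Here $f(e)=e s^*/(e s^*+(1-e)\underline r)$ is strictly increasing in $e$ because $\underline r>0$, so the entry stage has the larger proportional decline.
\end{itemize}
The same ordering carries over to the local discrete change in equation~\eqref{eq:Dstar}, $D_{j,g}\approx-f(e_{j,g})h(a_j)\Delta A_t$, because $h(a_j)$ is common to both stages.

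For the second claim, I make ``the execution gradient is steeper at entry'' precise as follows. Across occupations with the same $a_j$, consider the derivative of displacement with respect to the occupation's execution measure, evaluated at each stage. Suppose stage shares are generated from occupation shares with entry shifted upward, $e_{j,\mathrm{entry}}=e_j+\Delta$ with $\Delta>0$ and $d e_{j,g}/d e_j=1$. Then the level gradient at every stage is $\partial s_j^*/\partial A_t$, and the two stages tie. The comparison therefore has to rest on the proportional gradient $f'(e)\,h(a_j)\Delta A_t$, or on an alternative mapping.

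This is the step I expect to be the main obstacle. Write $f'(e)=s^*\underline r/(e s^*+(1-e)\underline r)^2$. It is decreasing in $e$ exactly when $s^*>\underline r$, so a higher entry share gives a steeper proportional gradient only if $s^*<\underline r$. I would first try to state the corollary's gradient claim in the level-elasticity form $\partial H_{j,g}/\partial A_t$ relative to senior work. If that fails, I would add a regularity condition, either $s_j^*\le\underline r$ or $e_{j,\mathrm{entry}}=\lambda e_{j,\mathrm{senior}}$ with $\lambda>1$. Under the proportional mapping $e_{j,\mathrm{entry}}=\lambda e_{j,\mathrm{senior}}$, the level gradient at entry is $\lambda\,\partial s_j^*/\partial A_t$. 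That is strictly steeper than the senior gradient with no further conditions, and this is the version I would commit to.
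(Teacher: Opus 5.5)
Your argument for the first clause matches the paper's. The paper justifies the corollary only by remarking that, under Assumption~\ref{ass:seniority}, Proposition~\ref{prop:sorting} implies a larger near-term response for entry-level work at a given capability level. That is exactly your application of the level and proportional comparisons to two roles that share $a_j$, $\theta_j$, $s_j^*$ and $\underline r$ and differ only in $e_{j,g}$.

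For the second clause the paper gives no separate derivation, and your diagnosis is correct. Because $H_{j,g}$ is linear in $e_{j,g}$, the level slope in a stage's own share is $\partial s_j^*/\partial A_t$ at both stages. The proportional slope $f'(e)\,\partial\ln s_j^*/\partial A_t$ is larger at the higher share only if $s_j^*<\underline r$. At $s_j^*=\underline r$ one has $f(e)=e$ and the slopes tie, so your condition should be strict. Assumption~\ref{ass:seniority} orders the two shares within an occupation but says nothing about how they co-move across occupations, and a slope comparison needs that. The model itself sharpens your point. As $A_t\to0$, equation~\eqref{eq:sstar} gives $s_j^*\to\alpha^{-\sigma/(\sigma-1)}>1\ge\underline r$. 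So on an initial range of capability $f$ is strictly concave, and under a common-slope mapping such as your additive shift the proportional execution gradient is \emph{flatter} at entry.

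The gap is therefore in the statement's second clause rather than a missing idea of yours. Still, you should not present the $\lambda$-mapping as proving that clause as written.
\begin{itemize}
\item The assumption $e_{j,\mathrm{entry}}=\lambda e_{j,\mathrm{senior}}$ gives a steeper level gradient only because it posits that entry content moves $\lambda$ times as fast with the occupational index. It also requires $e_{j,\mathrm{senior}}\le 1/\lambda$. The conclusion is built into the hypothesis.
\item It does not rescue the proportional version, which is the one matched to the paper's log-change outcomes. There, with $e=e_{j,\mathrm{senior}}$, the comparison is $\lambda f'(\lambda e)$ against $f'(e)$. Entry is steeper if and only if $\underline r>\sqrt{\lambda}\,e\,(s_j^*-\underline r)$. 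This fails, for instance, when $s_j^*>(1+\sqrt{\lambda})\,\underline r$ and $e$ is close enough to $1/\lambda$.
\end{itemize}
The defensible statement is conditional. Suppose that, beyond Assumption~\ref{ass:seniority}, $de_{j,\mathrm{entry}}/de_j>de_{j,\mathrm{senior}}/de_j\ge 0$ across occupations at fixed $a_j$. Then the level gradient is steeper at entry. If in addition $s_j^*<\underline r$, the proportional gradient is steeper at entry as well.
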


Aggregate evaluation capacity is
\begin{equation}
V_t^{\mathrm{tot}}=S_t\bar s_t^V,
\label{eq:Vtot}
\end{equation}
where $S_t$ is the senior mass and $\bar s_t^V$ is average evaluation skill among seniors. This closure is mass times average skill rather than a complete overlapping-cohorts model.

\begin{corollary}[Downstream evaluation degradation]
\label{cor:downstream}
After the cohort lag, the lower novice inflow reduces $S_t$, and cohorts trained with less execution practice reduce $\bar s_t^V$. Total evaluation capacity therefore falls through both the quantity and quality margins. Fields that lose the entry rung earlier should exhibit weaker mid-career evaluation performance earlier.
\end{corollary}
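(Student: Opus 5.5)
The plan is to treat the corollary as two separate comparisons against the no-shock counterfactual path. I would establish the \emph{quantity} margin for $S_t$ and the \emph{quality} margin for $\bar s_t^V$ separately, and then combine them through the product closure in equation~\eqref{eq:Vtot}.

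The quantity margin comes directly from Proposition~\ref{prop:pipeline}. For the first $L$ periods the senior path coincides with the counterfactual. From period $L+1$ on, the recursion in equation~\eqref{eq:senior} receives the inflow $\mu N_\infty<\mu N_0$. A one-line induction on the gap $S_t^{\mathrm{cf}}-S_t$ shows that this gap is strictly positive from then on. It grows toward $\mu(N_0-N_\infty)/\delta_S$, since it satisfies the same linear recursion with a positive forcing term. So $S_t$ is strictly below its counterfactual after the cohort lag.

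For the quality margin I would work cohort by cohort with equation~\eqref{eq:sV}. A cohort's practice input is $\lambda e_t s_t^*$. After the shock, entry-level human execution falls, through a lower $s_t^*$ by equation~\eqref{eq:psi} or a lower entry execution share by Assumption~\ref{ass:seniority}. With $\Phi$ increasing, $\Phi(\lambda e_t s_t^*)$ is strictly lower in every period of training. Starting from a common initial skill, induction on the linear recursion with $1-\delta_V\in[0,1)$ shows that each post-shock cohort's skill is strictly below the skill it would have had without the shock.

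To move from cohort skill to $\bar s_t^V$, I need an explicit reading of the closure. I would take $\bar s_t^V$ to be the inflow-weighted average of the skills of surviving senior cohorts, with pre-shock cohorts' skills unaffected. Once post-shock cohorts carry positive weight, which happens only after the lag $L$, the average is a convex combination in which some entries are strictly lower and none are higher. Hence $\bar s_t^V$ falls below its counterfactual value.

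Combining the two margins, $V_t^{\mathrm{tot}}=S_t\bar s_t^V$ is a product of positive factors. Both factors are weakly lower than in the counterfactual after the lag, and $S_t$ is strictly lower. So total evaluation capacity falls, through both margins.

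The cross-field statement follows from a time-shift argument. The degradation begins exactly $L$ periods after the date the entry rung is lost. A field shocked at date $\tau_1<\tau_2$ therefore has a degradation path shifted earlier by $\tau_2-\tau_1$. If fields are otherwise identical, mid-career evaluation capacity $V_t^{\mathrm{tot}}$ is lower earlier in the field that lost the rung first.

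The step I expect to be the main obstacle is the quality margin. The paper states explicitly that $V_t^{\mathrm{tot}}=S_t\bar s_t^V$ is not a full overlapping-cohorts model, so $\bar s_t^V$ is not pinned down by the equations given. The claim that average skill falls therefore depends on a composition assumption. One subtlety is that the lower inflow shrinks the weight on post-shock cohorts. Still, any positive weight on strictly lower-skill cohorts lowers the average relative to the counterfactual, in which those same cohort slots are filled by higher-skill trainees. I would state this cohort-averaging convention as the maintained closure and keep $\Phi$ strictly increasing, so that the quality decline is strict rather than only weak.
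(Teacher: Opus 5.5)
Your decomposition matches the paper's. The paper gives only the informal version of this argument: Proposition~\ref{prop:pipeline} for $S_t$, equation~\eqref{eq:sV} for skill, and the closure~\eqref{eq:Vtot}. It explicitly conditions on two premises, that entry-level practice falls and that practice feeds evaluative skill. Your induction on the gap $S^{\mathrm{cf}}_t-S_t$ is fine, and so is the time-shift argument for the cross-field claim.

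The step that fails is the one you flagged and then dismissed. Under your inflow-weighted closure, the shocked and counterfactual averages use \emph{different} weights. ``Some entries lower, none higher'' implies a lower average only at fixed weights. Shrinking the post-shock cohorts shifts weight toward incumbent seniors. That raises $\bar s_t^V$ whenever the counterfactual entering cohorts would have been less skilled than incumbents, as can happen if skill keeps rising with tenure under~\eqref{eq:sV}. For instance, take incumbents of mass $1$ and skill $10$, a counterfactual entering cohort of mass $1$ and skill $5$, and a shocked entering cohort of mass $0.1$ and skill $4.9$. Then $\bar s^{\mathrm{cf}}=7.5$, while the shocked average is $10.49/1.1\approx 9.5$. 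So the principle ``any positive weight on strictly lower-skill cohorts lowers the average'' is false, and so is your claim that both factors of $S_t\bar s_t^V$ are weakly lower.

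The repair is easy, in two parts.
\begin{itemize}
\item For total capacity, bypass the average. Under your closure, $V_t^{\mathrm{tot}}=S_t\bar s_t^V=\sum_c n_{ct}s_{ct}$. Every cohort's surviving mass $n_{ct}$ and skill $s_{ct}$ are weakly lower, strictly so for post-shock cohorts. Hence $V_t^{\mathrm{tot}}$ falls with no composition condition; in the example it drops from $15$ to $10.49$.
\item For the quality margin, you have two options. You can state it per cohort, equivalently at a given number of seniors, which is how the paper phrases P3b and P4. Or you can add an explicit assumption ruling out the favorable reweighting: for example, skill is formed only in the entry phase and is constant or depreciating after promotion along a stationary counterfactual. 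Then entering cohorts are at least as skilled as the incumbent average, and the reweighting works in your favor.
\end{itemize}

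Separately, do not try to derive the fall in practice from~\eqref{eq:psi} or Assumption~\ref{ass:seniority}. The first is a conditional demand at fixed output. The second is a cross-sectional ranking, not a post-shock change. The corollary takes ``cohorts trained with less execution practice'' as a premise, and the paper states that the model does not derive it.
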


The corollary is conditional on two responses the model does not derive: AI must reduce entry-level human execution, and execution practice must be an input into evaluative skill. The current data do not observe the second response and address the first only weakly. Proposition~\ref{prop:pipeline} is therefore cohort accounting conditional on a pipeline shock, not a result that AI causes one.

Any framework in which entry feeds the future workforce can produce a delayed decline in the number of senior workers. The more specific prediction here concerns quality: because the removed input is execution practice, average evaluative skill and downstream evaluation outcomes may deteriorate even conditional on the number of senior evaluators. Section~\ref{app:predictions} develops the comparison formally.

\subsection{A quantitative structure: uses and limits}
\label{sec:structural-form}

The analytical aggregator in \eqref{eq:H} protects evaluation labor through a fixed oversight residual $\underline r$. For quantitative illustration, I replace that residual with a constant-elasticity complementarity between execution services $X_j$ and evaluation services $V_j$:
\begin{equation}
Y_j=\Big[\omega_j^{1/\eta}X_j^{(\eta-1)/\eta}+(1-\omega_j)^{1/\eta}V_j^{(\eta-1)/\eta}\Big]^{\eta/(\eta-1)},\qquad V_j=s^V_j.
\label{eq:upper}
\end{equation}
The execution service $X_j$ is produced by the lower nest in \eqref{eq:Y}, which combines human execution and AI with elasticity $\sigma>1$. At the upper nest, $\eta<1$ captures complementarity: reliable output cannot be increased simply by generating more and checking less.

Three calibration conventions should be explicit. The technology weight is set equal to the measured execution share, $\omega_j=e_j$; because $e_j$ indexes task statements rather than an expenditure, time, or employment share, this is a convention and not an aggregation result. Setting $V_j=s^V_j$ makes the upper nest a per-worker illustration with no evaluation wage, hiring margin, or firm choice over evaluation intensity; the aggregate service consistent with \eqref{eq:Vtot} would be $V_{jt}=S_{jt}\bar s^V_{jt}$, and the scenario plots only the headcount component $S_t$. And the AI-capable share affects both the level of the effective AI price through $\theta_j$ and the rate at which the post-2022 shock lowers that price, two roles a richer design would identify with separate variation. Two further limits: \eqref{eq:Vtot} combines senior mass and average skill but is not a full cohort model, and the entry response $N_t=\nu(c^N_t)$ is assumed rather than derived from novice labor demand, so Proposition~\ref{prop:pipeline} is cohort accounting conditional on a fall in entry-level practice rather than evidence that AI causes that fall.

Evaluation enters on two margins, and both are calibrated rather than estimated. The first is production insulation: because $\eta<1$ and AI does not enter $V_j$, evaluation-intensive roles should respond less to cheaper execution. The second is pipeline quality: equations~\eqref{eq:sV} and \eqref{eq:senior} imply that reduced junior execution lowers future evaluation capacity after the cohort lag $L$, which is the source of predictions P3 and P4. The cross-occupation employment response that could in principle discipline the first is flat in these data, and the second is unobserved, so $\eta$ is calibrated rather than estimated and the apparatus yields no structural estimate (see Appendix~\ref{sec:estimation}). The two-nest form also clarifies pooled factor-demand estimates: a one-nest regression that does not separate the two functions may return an elasticity below one, suggesting complementarity between human labor and AI, when the estimate in fact blends the substitutable execution margin ($\sigma>1$) with the complementary evaluation margin ($\eta<1$) under occupation-specific weights related to $e_j$.

\subsection{Predictions and comparison with alternative mechanisms}
\label{app:predictions}

\begin{description}[leftmargin=2.2em,style=nextline]
\item[P1. Execution-share sorting.] Conditional on the AI-capable share, displacement increases in execution share: $\beta_e<0$ in \eqref{eq:reg} with the signed-change outcome. A capability-only exposure measure predicts no such independent effect once capability is held fixed; the execution-evaluation mechanism predicts one because evaluation carries a human oversight residual.
\item[P2. Career-stage gradient.] Within AI-exposed occupations, entry-level displacement exceeds senior displacement, and the execution gradient is steeper at entry.
\item[P3. Cohort-lag temporal sequence.] Entry employment falls immediately, the senior stock is preserved for $L$ periods, and the senior stock declines geometrically thereafter; this is cohort accounting conditional on the entry and practice responses to AI, which the model assumes rather than derives (Section~\ref{sec:structural-form}).
\item[P4. Downstream evaluation degradation.] Fields that lost the entry-level rung earliest exhibit weaker mid-career evaluation capacity, with cohort-lag onset.
\end{description}

Table~\ref{tab:compare} compares formal model closures rather than entire literatures. In particular, the task-based column represents a capability-only exposure implementation. A general task model can incorporate evaluation tasks, task-specific elasticities, regulation, and complementarities, and can therefore generate an execution-share effect. Prediction P1 is distinctive only relative to the one-dimensional automatability measures commonly used in empirical work.

With that qualification, the near-term predictions separate mechanisms only partially. P2 distinguishes the career-stage mechanisms from a capability-only exposure index but not from one another. P1 separates the present model from that index and from the other mechanisms to the extent that execution share is not collinear with knowledge rank or codifiability; Section~\ref{sec:results} shows that the required variation exists.

The long-run predictions differ by outcome. A delayed decline in the aggregate number of senior evaluators (P3a) follows from any model in which entry feeds the future stock, including dynamic tacit-knowledge and apprenticeship-viability accounts. A decline in evaluation quality per senior (P3b), and weaker evaluation outcomes at a given number of evaluators (P4), require the practice-to-skill channel made explicit here. The contribution is therefore not the generic possibility of a pipeline. It is the measurable execution-evaluation axis and the link from execution practice to later evaluative skill. P1 and P2 can be examined now; P3 and P4 occur only after the cohort lag and are dated in Appendix~\ref{sec:estimation}.

The placement of the rival models in the P3 rows depends on how senior capability is produced. In knowledge-hierarchy models, it is not produced by entry-rung work. \citet{ide2025ai} treat knowledge as a fixed endowment that the hierarchy reallocates, so automating worker-level tasks does not reduce the future supply of high-knowledge agents. In \citet{garicano2000hierarchies}, knowledge is acquired through costly investment separate from work at the lower rung; automating routine problems may even raise the return to acquiring exception-handling knowledge. Neither closure generates a delayed decline in senior capacity.

Tacit knowledge is less clear-cut. In a static formalization it is a fixed skill attribute, so senior workers remain protected. In its usual substantive interpretation, however, tacit knowledge is acquired through experience. A dynamic version would route senior capability through the same practice that AI removes and would therefore converge toward the mechanism developed here. The distinction then lies in the explicit execution-evaluation measure and in the modeled link from practice to later evaluation skill. Apprenticeship viability also allows entry to feed the future stock through on-the-job training, which is why it predicts P3a; its primitives do not by themselves imply the per-worker quality effect in P3b.

The apprenticeship-viability account is especially close, but it operates on a different margin. It is a firm investment mechanism: when AI absorbs entry-level tasks, the private return to training juniors falls and firms train fewer of them. The execution-evaluation mechanism is a human-capital production mechanism: removing entry-level execution reduces the practice through which workers acquire evaluative judgment. The two effects may coincide, but they need not. A firm could continue formal training while delegating the practice that builds judgment to AI. P4 distinguishes the mechanisms because it concerns the quality of the human capital produced, not only the number of juniors trained.

\citet{ide2025ai} generate the same near-term pattern through general equilibrium: autonomous AI replaces worker-level labor and reallocates the remaining workforce toward problem solving. The present model differs only in its delayed predictions for the senior stock and evaluation quality. Firm organization, wages, and compute prices lie outside this paper's cross-sectional labor-demand comparison.

\begin{table}[t]
\centering
\caption{Which formal models predict each empirical pattern, under which assumption}
\label{tab:compare}
\footnotesize
\setlength{\tabcolsep}{3pt}
\begin{tabular}{>{\raggedright\arraybackslash}p{4.5cm}ccccc}
\toprule
 & Capability- & Knowledge & Dynamic tacit & Apprentice- & Execution-- \\
 & only & hierarchy & knowledge & ship & evaluation \\
\addlinespace[2pt]
\emph{Key assumption:} & \emph{\scriptsize one automat-} & \emph{\scriptsize knowledge not} & \emph{\scriptsize tacit skill grows} & \emph{\scriptsize training is a} & \emph{\scriptsize eval.\ skill from} \\[-2pt]
 & \emph{\scriptsize ability index} & \emph{\scriptsize made at entry} & \emph{\scriptsize with experience} & \emph{\scriptsize firm choice} & \emph{\scriptsize exec.\ practice} \\
\midrule
Entry-level decline (gradient)      & No & Yes & Yes & Yes & Yes \\
P1: $\beta_e<0 \mid a_j$             & No & Partial & Partial & Partial & Yes \\
P2: steeper gradient at entry        & No & Yes & Yes & Yes & Yes \\
P3a: aggregate senior \emph{quantity} falls after $L$ & No & No & Yes & Yes & Yes \\
P3b: per-worker senior eval.\ \emph{quality} falls & No & No & Partial & Partial & Yes \\
P4: eval.\ outcomes decay at given numbers & No & No & Partial & No & Yes \\
\bottomrule
\end{tabular}
\par\vspace{4pt}
\footnotesize\raggedright Note: Columns represent formal model closures under the stated assumptions, not entire literatures. ``Capability-only exposure'' refers to \citet{acemoglu2022tasks} implemented with a one-dimensional automatability index; the ``No'' entries mean that this implementation has no independent execution-share term.
\par\smallskip ``Knowledge hierarchy'' combines \citet{garicano2000hierarchies} and \citet{ide2025ai}. Neither routes senior capability through entry-rung work, so both P3 margins are No. ``Dynamic tacit knowledge'' treats skill as experience-produced: it predicts fewer experienced workers after entry contracts (P3a) and reaches P3b/P4 only when the accumulated skill is evaluative. ``Apprenticeship viability'' \citep{garicano2025training} predicts fewer future seniors (P3a) but does not derive evaluation quality at fixed numbers.
\par\smallskip P3a is the quantity margin; P3b and P4 are quality margins. ``Partial'' denotes a prediction that requires an auxiliary assumption.
\end{table}

\section{Additional Measurement Details and Descriptive Evidence}
\label{app:measurement}

\subsection{Weighting, batch order, and score uncertainty}

The baseline execution share gives core tasks twice the weight of supplemental tasks. Equal weights, O*NET task-importance weights, and expected-frequency weights yield occupation scores correlated $0.98$ to $0.99$ with the baseline among cognitive occupations. The six-year employment coefficient ranges from $-0.70$ to $-0.74$ across the four aggregations, with nearly unchanged $t$-statistics.

Production scoring used batches of $25$ statements in fixed O*NET order, so adjacent statements often came from the same occupation even though titles were hidden. I rescored the $500$ audited statements in three arms: production-style adjacency, randomized order with at most one statement per occupation per batch, and an identical repeat of the randomized arm. The repeated randomized inputs correlate $0.989$ and differ by $2.0$ points on average, showing that temperature zero is not perfectly deterministic. Grouped and randomized scores correlate $0.95$ and differ by $5.8$ points on average. Same-occupation adjacency pulls scores mildly toward the occupation mean, but the mean shift is only $-0.9$ points and the resulting systematic shift in the occupation-level execution share is $-0.011$. Drift relative to production scores generated months earlier is similar: mean absolute difference $6.5$ points and mean shift $+2.1$ points.

The employment regressions treat the generated scores as fixed. Their standard errors therefore do not propagate scoring uncertainty. The audit's task-level mean absolute error of about $10$ points provides a scale for that uncertainty; repeated-score or errors-in-variables estimates are natural extensions.

\subsection{Sample flow and unconditional employment relationships}

Table~\ref{tab:flow} traces the sample from the scored O*NET universe to the estimation samples, and Figure~\ref{fig:dlog} plots the unconditional six-year employment relationships.

\begin{table}[t]
\centering
\caption{Sample flow}
\label{tab:flow}
\small
\begin{tabular}{lrr}
\toprule
 & All & Cognitive \\
\midrule
O*NET occupations scored (index data)                       & 798 & 400 \\
OEWS analysis universe                                       & 861 & 435 \\
\quad with index score                                       & 772 & 389 \\
\quad with 2019--2025 employment change and index score      & 708 & 337 \\
\quad\quad and routine intensity and rate exposure           &     & 284 \\
\quad with 2022--2025 employment change                      &     & 411 \\
\quad\quad and observed-usage exposure                       &     & 364 \\
\quad present in all three windows (balanced)                &     & 336 \\
\quad\quad and routine intensity                             &     & 334 \\
\bottomrule
\end{tabular}
\par\vspace{4pt}
\footnotesize\raggedright Note: cognitive means the ten white-collar SOC major groups ($11$, $13$, $15$, $17$, $19$, $23$, $25$, $27$, $29$, $43$). The OEWS universe exceeds the matched index count because some OEWS occupations have no scored O*NET counterpart; the 2022--2025 sample exceeds the 2019--2025 sample because 2018-SOC codes phased into separate OEWS estimation after May 2019 have no 2019 baseline.
\end{table}

\begin{figure}[t]
\centering
\includegraphics[width=\textwidth]{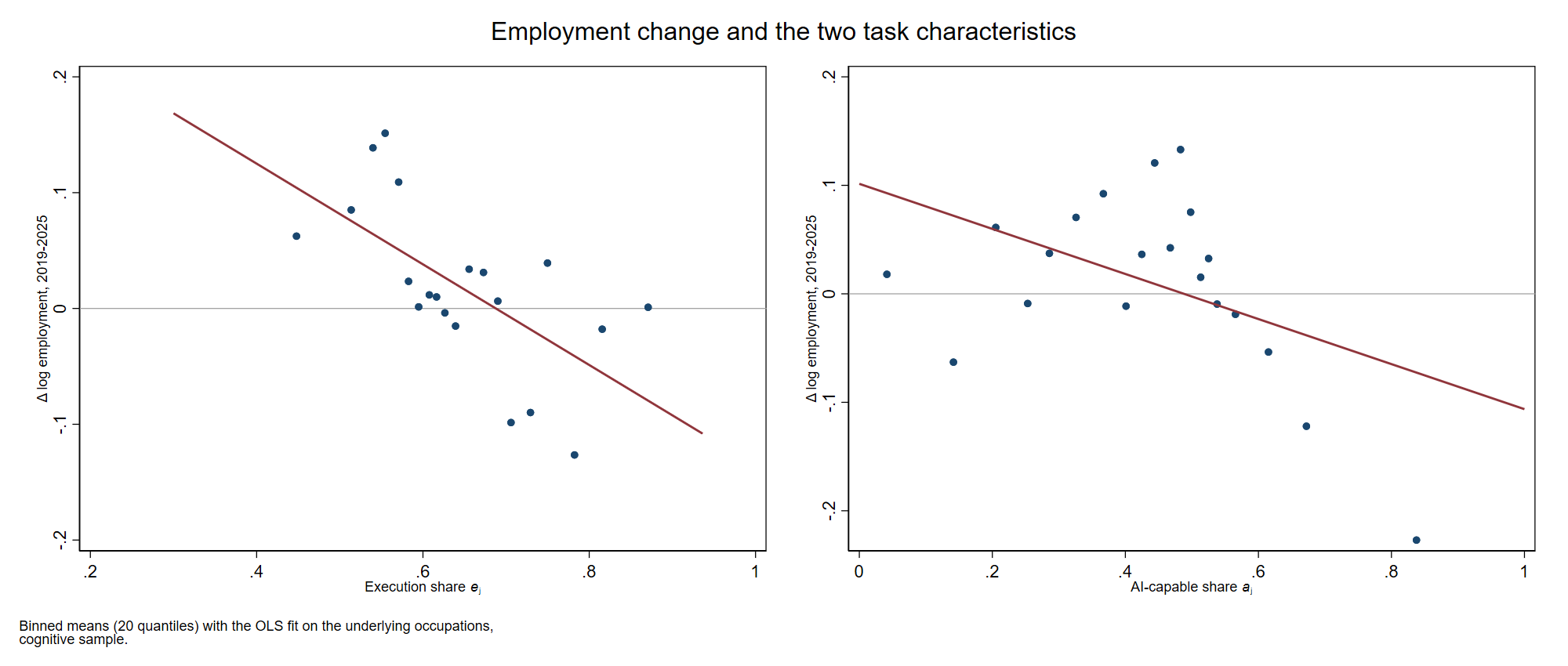}
\caption{Six-year log employment change against execution share (left) and AI-capable share (right), with binned means and linear fits for cognitive occupations. Section~\ref{sec:results} estimates the two relationships jointly.}
\label{fig:dlog}
\end{figure}

\subsection{Off-diagonal occupations and window-specific routine controls}

Table~\ref{tab:offdiag} lists the occupations on which the execution share and routine intensity diverge most sharply, in both directions.

\begin{table}[t]
\centering
\caption{Execution share and routine intensity diverge: occupations off the diagonal}
\label{tab:offdiag}
\small
\begin{tabular}{llcc}
\toprule
SOC & Occupation & $e_j$ & RTI \\
\midrule
\multicolumn{4}{l}{\emph{A. Non-routine execution (high $e_j$, low RTI)}}\\
29-1022 & Oral and maxillofacial surgeons & $0.81$ & $-0.96$ \\
27-1023 & Floral designers & $0.80$ & $-1.05$ \\
15-2021 & Mathematicians & $0.73$ & $-1.69$ \\
19-2012 & Physicists & $0.66$ & $-2.77$ \\
\midrule
\multicolumn{4}{l}{\emph{B. Routine evaluation (low $e_j$, high RTI)}}\\
43-9081 & Proofreaders and copy markers & $0.48$ & $\phantom{-}1.58$ \\
27-2023 & Umpires, referees, and sports officials & $0.49$ & $\phantom{-}0.48$ \\
13-2011 & Accountants and auditors & $0.43$ & $\phantom{-}0.19$ \\
13-2053 & Insurance underwriters & $0.30$ & $-0.13$ \\
\bottomrule
\end{tabular}
\par\vspace{4pt}
\footnotesize\raggedright Note: $e_j$ is the execution share ($0$ to $1$); RTI is standardized routine-task intensity, negative below average. Occupations are those with the largest standardized $e_j$-minus-RTI gap in each direction within cognitive occupations.
\end{table}

On the balanced cognitive sample with routine intensity ($N=334$), the univariate routine coefficient is $-0.016$ ($p=0.07$) in 2019--2022, $-0.031$ ($p<0.01$) in 2022--2024, and $-0.006$ in 2024--2025. When routine intensity, execution share, and capability enter jointly, routine intensity is insignificant in the first and last windows and equals $-0.022$ ($p<0.01$) in 2022--2024. The execution coefficients remain negative and significant in all three windows ($-0.27$, $-0.25$, and $-0.18$; all $p<0.05$). These regressions show that the execution gradient is not mechanically absorbed by routine intensity, although neither measure identifies a causal technology shock.

\section{Additional Employment Robustness}
\label{app:employmentrobust}

\subsection{Post-2022 thresholds, surface, and family-wise inference}

The median cells in Table~\ref{tab:usage} are one descriptive representation of the usage-execution plane. Moving both thresholds to the 40th percentile gives a high-high coefficient of $-0.082$ ($t=-3.5$), while the 60th-percentile split gives $-0.135$ ($t=-5.2$). The corner contrast becomes larger as the comparison sharpens: $-0.115$ at the median, $-0.143$ ($t=-3.7$) for top-versus-bottom terciles, and $-0.187$ ($t=-3.7$) for top-versus-bottom quartiles. These are variations on one descriptive object rather than independent tests. Figure~\ref{fig:surface} shows the full two-dimensional pattern behind them.

\begin{figure}[t]
\centering
\includegraphics[width=\textwidth]{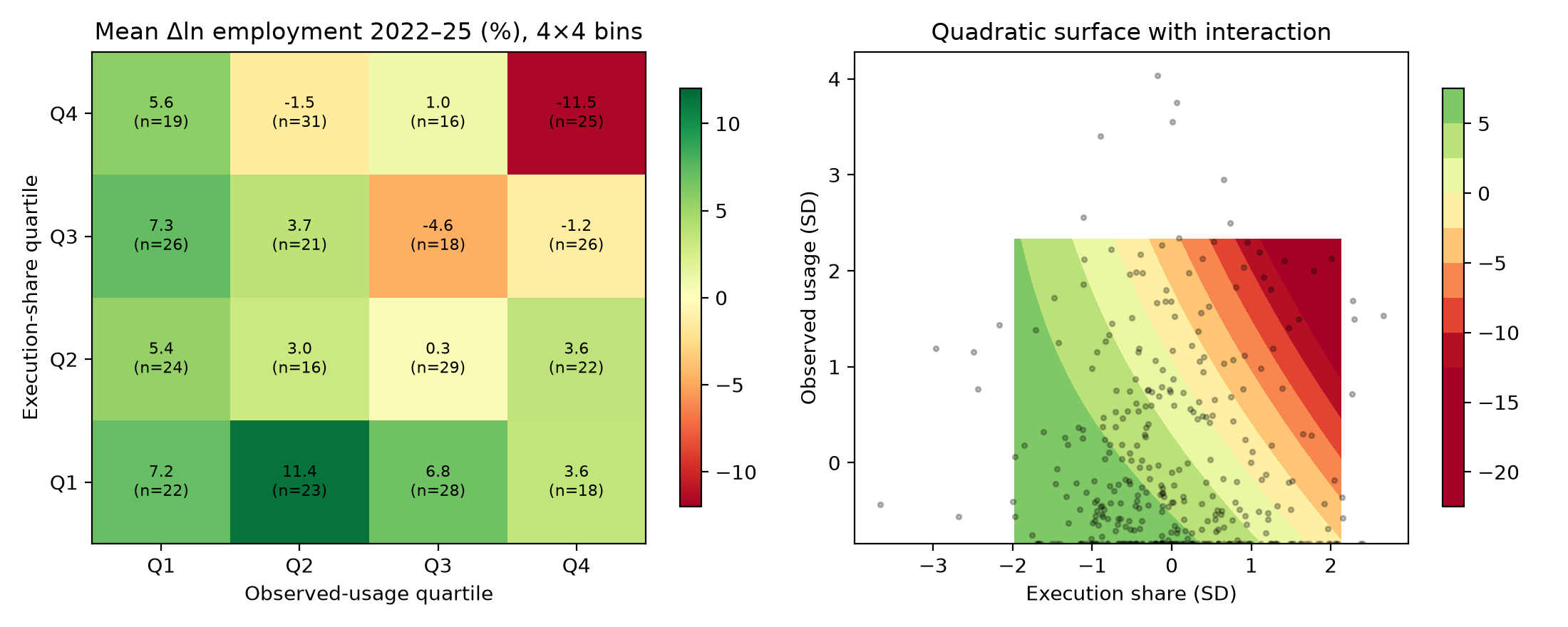}
\caption{Post-2022 employment change over the usage-execution plane. Left: mean 2022--2025 log employment change (percent) in rank-quartile cells, with cell sizes. Right: a quadratic surface with an interaction term. The joint-high cell is the deepest, but the continuous interaction ($t=-1.3$) and the family-wise additive-null test ($p=0.202$) are insignificant.}
\label{fig:surface}
\end{figure}

I use two permutation procedures over the family of five thresholds. A global no-association test permutes the raw outcome across occupations $2{,}000$ times and records the maximum corner $|t|$. The observed value $5.33$ exceeds every draw, yielding $p=0.0005$ by the plus-one convention. This test shows that the plane predicts employment change after threshold search, but it does not test interaction because additive negative main effects also produce a deep high-high corner.

The second test imposes the additive null. I fit cubic functions of execution and usage, assign wild Rademacher signs to residuals by SOC major group, enumerate all $2^{10}=1{,}024$ assignments, and studentize each super-additivity contrast using a major-group cluster-robust variance. The observed maximum $|t|$ is $2.43$; $207$ assignments are at least as extreme, for exact family-wise $p=0.202$. The super-additive component is therefore insignificant once dependence and studentization are imposed at the family level.

Office and administrative support accounts for $30$ of the $85$ high-high occupations. Splitting that cell gives a coefficient of $-0.192$ ($t=-5.0$) for SOC 43 and $-0.075$ ($t=-3.5$) for the remaining occupations. Dropping SOC 43 leaves a high-high coefficient of $-0.075$ ($t=-3.6$). Office support is an important contributor but not the entire result.

\subsection{Weighting and influence}

\begin{table}[t]
\centering
\caption{Weighted and influence-robust estimates of the headline regression}
\label{tab:weights}
\small
\begin{tabular}{lccc}
\toprule
 & (1) & (2) & (3) \\
Dep.\ var.: $\Delta\ln$ emp.\ 2019--2025 & unweighted & emp.-weighted & drop smallest decile \\
\midrule
AI-capable share $a_j$ & $-0.468^{***}$ & $-0.240^{+}$   & $-0.377^{***}$ \\
                       & $(0.124)$      & $(0.127)$      & $(0.122)$ \\
Execution share $e_j$  & $-0.701^{***}$ & $-0.909^{***}$ & $-0.728^{***}$ \\
                       & $(0.192)$      & $(0.216)$      & $(0.201)$ \\
Routine intensity      & $-0.051^{***}$ & $-0.040^{**}$  & $-0.053^{***}$ \\
                       & $(0.016)$      & $(0.019)$      & $(0.017)$ \\
Rate exposure          & $-0.025$       & $-0.094$       & $-0.063$ \\
                       & $(0.061)$      & $(0.066)$      & $(0.055)$ \\
\midrule
Observations           & 284            & 284            & 260 \\
$R^2$                  & 0.17           & 0.34           & 0.17 \\
\bottomrule
\end{tabular}
\par\vspace{4pt}
\footnotesize\raggedright Note: the specification of Table~\ref{tab:reg} column 3 on cognitive occupations, robust standard errors in parentheses. Column 1 repeats the unweighted baseline; column 2 weights by May 2019 employment; column 3 drops the bottom decile of 2019 employment. Leaving out one SOC major group at a time bounds the execution coefficient between $-0.790$ and $-0.487$, significant in every replicate. $^{+}\,p<0.10$, $^{**}\,p<0.05$, $^{***}\,p<0.01$.
\end{table}

Weighting by the inverse sampling variance implied by published OEWS relative standard errors gives an execution coefficient of $-1.03$ ($t=-4.8$) and a 2026 capability coefficient of $-0.24$ ($p=0.09$), similar to employment weighting. The 2019 and 2025 estimates use non-overlapping survey panels, so no covariance term is required for this calculation. These exercises, with the leave-one-family-out bounds reported in the note to Table~\ref{tab:weights}, establish influence robustness for the between-family association; they do not overcome the null within-family fixed-effects estimate.

\subsection{The three-window decomposition and sample definition}

\begin{table}[H]
\centering
\caption{Window decomposition: characteristic coefficients by window}
\label{tab:dissoc}
\small
\begin{tabular}{lccc}
\toprule
 & 2019--2022 & 2022--2024 & 2024--2025 \\
\midrule
\multicolumn{4}{l}{\emph{A. Univariate (one characteristic at a time)}}\\
AI-capable share $a_j$ & $-0.09$       & $-0.14^{**}$  & $-0.06$ \\
                       & $(0.08)$      & $(0.06)$      & $(0.04)$ \\
Execution share $e_j$  & $-0.24^{**}$  & $-0.27^{***}$ & $-0.12^{**}$ \\
                       & $(0.11)$      & $(0.09)$      & $(0.05)$ \\
\addlinespace
\multicolumn{4}{l}{\emph{B. Joint (both entered together)}}\\
AI-capable share $a_j$ & $-0.16^{+}$   & $-0.22^{***}$ & $-0.09^{**}$ \\
                       & $(0.08)$      & $(0.06)$      & $(0.04)$ \\
Execution share $e_j$  & $-0.32^{***}$ & $-0.37^{***}$ & $-0.17^{***}$ \\
                       & $(0.11)$      & $(0.10)$      & $(0.06)$ \\
\bottomrule
\end{tabular}
\par\vspace{4pt}
\footnotesize\raggedright Note: Each cell is an OLS coefficient for the windowed log employment change on a fixed sample of $336$ cognitive occupations; robust standard errors are in parentheses. Panel A enters the characteristics separately and Panel B jointly. Because the windows partition 2019--2025 on the same sample, the Panel-B coefficients sum approximately to the full-period slopes in Table~\ref{tab:reg}, column 2.
\par\smallskip The table is descriptive. Formal tests using annualized slopes do not reject equality across windows for execution share or the 2026-vintage capability score. The vintage-valid Eloundou results appear in the text and Table~\ref{tab:pretrend}. $^{+}\,p<0.10$, $^{**}\,p<0.05$, $^{***}\,p<0.01$.
\end{table}

Table~\ref{tab:dissoc} reports the window decomposition on the fixed sample. Formal annualized slope tests do not reject a break for either execution share or the retrospective 2026 capability score. The execution slopes are $-0.08$, $-0.13$, and $-0.12$ per year; the diffusion-versus-pre comparison has $p=0.35$, and equality of all three has $p=0.61$. The 2026 capability slopes are $-0.03$, $-0.07$, and $-0.06$, with pairwise and joint $p$-values between $0.35$ and $0.64$. The vintage-valid Eloundou slopes on this fixed sample are $-0.052$ (s.e.\ $0.018$) per year in 2019--2022, $-0.071$ ($0.023$) in 2022--2024, and $-0.123$ ($0.037$) in 2024--2025: significantly negative in every window, with equality across the three marginal ($p=0.079$) and the 2024--2025 slope steeper than 2019--2022 by $0.070$ ($p=0.041$). This corroborates, on a different window partition, the steepening that Table~\ref{tab:pretrend} detects on the estimator-consistent panel.

Defining cognitive occupations as the contiguous SOC range 11 through 43 adds $117$ healthcare-support, protective-service, food-preparation, cleaning, personal-care, and sales occupations. Their mean 2019--2022 employment change is $-5.6$ percent, compared with $-0.5$ percent in the ten white-collar majors. The broader sample then produces execution coefficients of $-0.30$, $-0.06$, and $-0.09$ and capability coefficients of $+0.07$, $-0.17$, and $-0.05$ across the three windows. This ``double dissociation'' is driven by adding pandemic-sensitive in-person service occupations. It is a fact about a different population, not a timing result within white-collar work.

\section{Scoring Rubrics and Replication Protocol}
\label{app:rubric}

\subsection*{The execution-evaluation rubric}

Each O*NET task statement receives an execution score from $0$ to $100$ under the rubric summarized here; the full system prompt is included in the replication materials. Execution means producing or performing: the worker generates an output or carries out an action, such as making, building, operating, drafting, calculating, assembling, installing, entering, or preparing. Evaluation means judging or deciding: the worker assesses the correctness, quality, or fit of an output or situation and renders a verdict, such as checking, reviewing, inspecting, diagnosing, approving, prioritizing, choosing, auditing, or supervising.

A score of $100$ denotes pure execution, $0$ pure evaluation, and $50$ an even mix. The model scores the dominant cognitive act in the task as written and sees only the statement, not the occupation title.

The rubric treats routineness as a separate dimension. It explicitly warns against equating execution with routine work or evaluation with non-routine work. All four combinations are possible. Performing a complex one-off surgery or composing an original score is non-routine execution and therefore scores high. Inspecting each unit against a specification or screening applications against fixed criteria is routine evaluation and scores low. Data entry is routine execution, whereas diagnosing an unusual fault is non-routine evaluation. Table~\ref{tab:rubric} reports representative examples. This crossed design is the main difference from the earlier leading-verb heuristic and explains why the fuller measure can correlate with routine intensity without collapsing into it.

\subsection*{The AI-capability rubric}

The same task statements are scored separately on a $0$ to $100$ AI-capability scale. A score of $100$ means that a current frontier language model with typical tool access can perform the task as stated, end to end, at a professional standard. A score of $0$ means that it cannot. Intermediate scores indicate partial completion or substantial required human involvement.

The rubric asks about technical feasibility, not legality, licensing, accountability, or cost. Capability is evaluated relative to the model frontier at the date of the scoring run, so the measure has an explicit vintage. As in the execution run, occupation titles are hidden. The worked examples deliberately cross the two axes: AI-capable execution, AI-capable evaluation, AI-incapable execution, and AI-incapable evaluation. Table~\ref{tab:rubric_ai} gives actual O*NET statements from each quadrant.

\subsection*{Protocol, common to both runs}

The baseline measures use O*NET version 26.1 (November 2021); robustness measures use version 30.3 (May 2026). The production model is Claude Sonnet 4.6 (API identifier \texttt{claude-sonnet-4-6}). It scores $19{,}265$ statements at temperature zero in batches of $25$. Occupation titles are hidden, statements appear in fixed O*NET order, and malformed responses trigger up to five retries with exponential backoff. Execution and capability are scored in separate runs under separate system prompts, so neither score is an input into the other. The prompts, batching and parsing code, and complete score files are included in the replication archive.

The task-level audit described in Section~\ref{sec:data} uses a stratified, occupation-blind sample of $500$ statements, drawn with a fixed seed and oversampling supervisory, diagnostic, reviewing, and mixed tasks. Two model coders from a different generation independently score the tasks under different framings while blind to occupations, production scores, and each other. Their scores correlate $0.991$, and their consensus correlates $0.924$ with the production scores without level bias. Two protocol limits should be recorded. Agreement is computed unweighted over a draw that deliberately oversamples hard cases, so agreement on a representative draw would be higher, not lower. And the exact model identifiers of the two audit coders were not logged at run time; the archived scores and the fixed seed reproduce the comparison, but the specific coder versions cannot be recovered. The $30$ statements on which the audit coders agree with one another but differ materially from production are set aside for human adjudication. No broader human-coded audit was conducted.

The archive contains \path{audit_tasks_blind.csv}, \path{audit_tasks_key.csv}, \path{audit_coder_A.csv}, \path{audit_coder_B.csv}, \path{audit_results.csv}, and \path{audit_adjudication.csv}. The occupation-level comparisons with O*NET evaluation activities and the Anthropic Economic Index are reported in Section~\ref{sec:data}. For that reason, I describe the measures as internally assessed and externally anchored, not as human-validated.

\begin{table}[ht]
\centering
\caption{Worked examples from the scoring rubric}
\label{tab:rubric}
\small
\begin{tabular}{lcc}
\toprule
Task statement & Routine? & Score \\
\midrule
Enter patient demographic data into the record system & routine & 95 \\
Perform a complex, non-standard cardiac surgery & non-routine & 90 \\
Compose original background music for an advertisement & non-routine & 88 \\
Inspect each weld against the engineering spec and flag defects & routine & 12 \\
Screen loan applications against criteria and approve or deny & routine & 10 \\
Decide which grant proposals to fund on scientific merit & non-routine & 5 \\
\bottomrule
\end{tabular}
\par\vspace{4pt}
\footnotesize\raggedright Note: the execution score runs from $0$ (pure evaluation) to $100$ (pure execution). Non-routine execution scores high and routine evaluation scores low, so the score does not track routineness.
\end{table}

\begin{table}[ht]
\centering
\caption{AI-capability worked examples: the execution and AI-capability axes cross}
\label{tab:rubric_ai}
\small
\begin{tabular}{lcc}
\toprule
O*NET task statement & Execution $e_j$ & AI-capable $a_j$ \\
\midrule
\multicolumn{3}{l}{\emph{AI-capable, execution-weighted}}\\
Enter tax return information into computers for processing & 100 & 88 \\
Compute payment schedules & 95 & 95 \\
\midrule
\multicolumn{3}{l}{\emph{AI-capable, evaluation-weighted}}\\
Check completed work for spelling, grammar, punctuation, and format & 10 & 90 \\
Review manuscripts for publication in books and professional journals & 5 & 82 \\
\midrule
\multicolumn{3}{l}{\emph{AI-incapable, execution-weighted}}\\
Climb ladders to reach aircraft surfaces to be cleaned & 100 & 1 \\
Clean shelves, counters, and tables & 100 & 2 \\
\midrule
\multicolumn{3}{l}{\emph{AI-incapable, evaluation-weighted}}\\
Inspect finished dies for smoothness, contour conformity, and defects & 5 & 6 \\
Observe nurses and visit patients to ensure proper nursing care & 10 & 5 \\
\bottomrule
\end{tabular}
\par\vspace{4pt}
\footnotesize\raggedright Note: actual O*NET version 26.1 task statements with their language-model scores. The execution score $e_j$ runs from $0$ (pure evaluation) to $100$ (pure execution); the AI-capability score $a_j$ from $0$ (the model cannot perform the task as stated) to $100$ (it can, end to end). All four execution$\times$AI-capability combinations are populated, so the AI-capable share is not a relabeling of the execution share.
\end{table}

\section{CPS Seniority Specifications}
\label{app:cps}

Tables~\ref{tab:cps} and \ref{tab:cpsfe} report the person-level specifications summarized in Section~\ref{sec:results}. The outcome is employment among labor-force participants in cognitive occupations, and the coefficient of interest is the entry-age-by-post-2022 triple interaction. The sample contains $185{,}061$ person-years, weighted by the ASEC supplement weight, with standard errors clustered by occupation. Adding occupation-by-entry or entry-by-year fixed effects leaves the triple interaction nearly unchanged, so the null is not driven by broad occupation composition or common year shocks.

The outcome omits several margins on which the mechanism may first appear. Conditioning on labor-force participation excludes exit from the labor force and misses workers who never enter an occupation. The age comparison, 22--30 versus 35--64, is broader than job seniority or tenure. Occupation for unemployed workers refers to the last job, and annual ASEC observations blur timing.

The null is nevertheless informative about effect size. The standard error of the execution triple is $0.0030$, implying a minimum detectable effect of roughly $0.008$, or $0.8$ percentage points, under conventional power calculations. On an outcome with mean $0.976$, the design therefore rules out effects of that size or larger on this particular margin but remains uninformative about smaller effects or about entry hiring. A stronger public-data design would use matched Basic Monthly CPS rotations to study employment entry, unemployment, occupation switching, and labor-force exit, or construct occupation-by-age employment-to-population ratios from the CPS or ACS. Payroll, vacancy, resume, or job-posting data with seniority labels would provide the more direct test described in Section~\ref{sec:conclusion}.

\begin{table}[H]
\centering
\caption{CPS seniority test: employment triple interactions, cognitive occupations}
\label{tab:cps}
\small
\begin{tabular}{lccc}
\toprule
 & (1) & (2) & (3) \\
 & execution & AI-capable & both \\
\midrule
Execution $\times$ entry $\times$ post  & $0.0023$   &            & $0.0025$ \\
                                        & $(0.0030)$ &            & $(0.0029)$ \\
AI-capable $\times$ entry $\times$ post &            & $0.0012$   & $0.0019$ \\
                                        &            & $(0.0048)$ & $(0.0048)$ \\
Execution $\times$ post                 & $-0.0007$  &            & $-0.0008$ \\
                                        & $(0.0019)$ &            & $(0.0019)$ \\
AI-capable $\times$ post                &            & $-0.0012$  & $-0.0014$ \\
                                        &            & $(0.0023)$ & $(0.0023)$ \\
Execution $\times$ entry                & $0.0009$   &            & $0.0005$ \\
                                        & $(0.0021)$ &            & $(0.0020)$ \\
AI-capable $\times$ entry               &            & $-0.0026$  & $-0.0026$ \\
                                        &            & $(0.0032)$ & $(0.0029)$ \\
\midrule
Observations (person-years)             & 185{,}061  & 185{,}061  & 185{,}061 \\
Mean of employed                        & 0.976      & 0.976      & 0.976 \\
Occupation clusters                     & 191        & 191        & 191 \\
\bottomrule
\end{tabular}
\par\vspace{4pt}
\footnotesize\raggedright Note: Linear probability models for employment among CPS ASEC labor-force participants in cognitive occupations, 2019--2025. Entry ages are 22--30; the comparison group is 35--64. Estimates use ASEC weights, standardized task measures, and occupation-clustered standard errors.
\par\smallskip Lower-order entry and post terms are included but not shown. This table has no occupation or year fixed effects; Table~\ref{tab:cpsfe} adds them. The entry-by-post triple is the P2 test. It is small, insignificant, and slightly positive for both characteristics and in the joint specification. $^{+}\,p<0.10$, $^{**}\,p<0.05$, $^{***}\,p<0.01$.
\end{table}

\begin{table}[H]
\centering
\caption{CPS seniority triple under fixed effects, cognitive occupations}
\label{tab:cpsfe}
\small
\begin{tabular}{lcccc}
\toprule
 & (1) & (2) & (3) & (4) \\
 & no FE & occ, year & occ$\times$entry, year & occ, year, entry$\times$year \\
\midrule
Execution $\times$ entry $\times$ post  & $0.0023$   & $0.0030$   & $0.0030$   & $0.0029$ \\
                                        & $(0.0030)$ & $(0.0030)$ & $(0.0029)$ & $(0.0030)$ \\
AI-capable $\times$ entry $\times$ post & $0.0012$   & $0.0010$   & $0.0017$   &          \\
                                        & $(0.0048)$ & $(0.0047)$ & $(0.0046)$ &          \\
\midrule
Occupation FE                & no & yes & --  & yes \\
Year FE                      & no & yes & yes & yes \\
Occupation $\times$ entry FE & no & no  & yes & no  \\
Entry $\times$ year FE       & no & no  & no  & yes \\
Observations                 & 185{,}061 & 185{,}061 & 185{,}060 & 185{,}061 \\
\bottomrule
\end{tabular}
\par\vspace{4pt}
\footnotesize\raggedright Note: Each cell is the entry-by-post-by-characteristic coefficient from a separate CPS ASEC linear probability model. The sample contains cognitive labor-force participants aged 22--30 and 35--64 in 2019--2025. Estimates use ASEC weights, standardized task measures, lower-order interactions, and occupation-clustered standard errors ($191$ clusters).
\par\smallskip Column 1 repeats Table~\ref{tab:cps}; column 3 absorbs the occupation main effect within occupation-by-entry fixed effects. The AI-capable triple is reported for the fixed-effect designs in which it was estimated. The triple is small, positive, and insignificant throughout. $^{+}\,p<0.10$, $^{**}\,p<0.05$, $^{***}\,p<0.01$.
\end{table}

\section{A Quantitative Scenario Exercise}
\label{sec:estimation}

This appendix reports a scenario, not an estimate. I do not attempt indirect inference because the required moments are absent: the white-collar sample does not contain a pre-2022 execution trend that is steeper than the post-2022 trend, and the technology elasticities would have to be calibrated rather than estimated. An auxiliary-model exercise would therefore add little beyond internal consistency. The useful output of the model is its timing logic.

The scenario applies the pipeline in Appendix~\ref{app:formal} with cohort lags $L\in\{6,10,15\}$ years, a positive post-2022 practice-deficit path in the range allowed by the reduced-form evidence, and standard depreciation. Conditional on that deficit, three timing implications are robust. First, the senior evaluation stock is unchanged until the first AI-era cohort reaches seniority, so divergence begins near $2022+L$: 2028 at the shortest lag and the early 2030s at the central lag. Second, the full 2022--2025 practice deficit reaches the senior stock near $2025+L$, from 2031 to 2040 across the three lags. Third, any downstream deterioration in evaluation quality follows with an additional lag.

These dates are cohort arithmetic and are relatively insensitive to the shock magnitude or depreciation. The size of the decline is not. Figure~\ref{fig:predictions} plots the central-lag path, and no magnitude in either figure should be interpreted as an estimate.

\begin{figure}[H]
\centering
\includegraphics[width=0.85\linewidth]{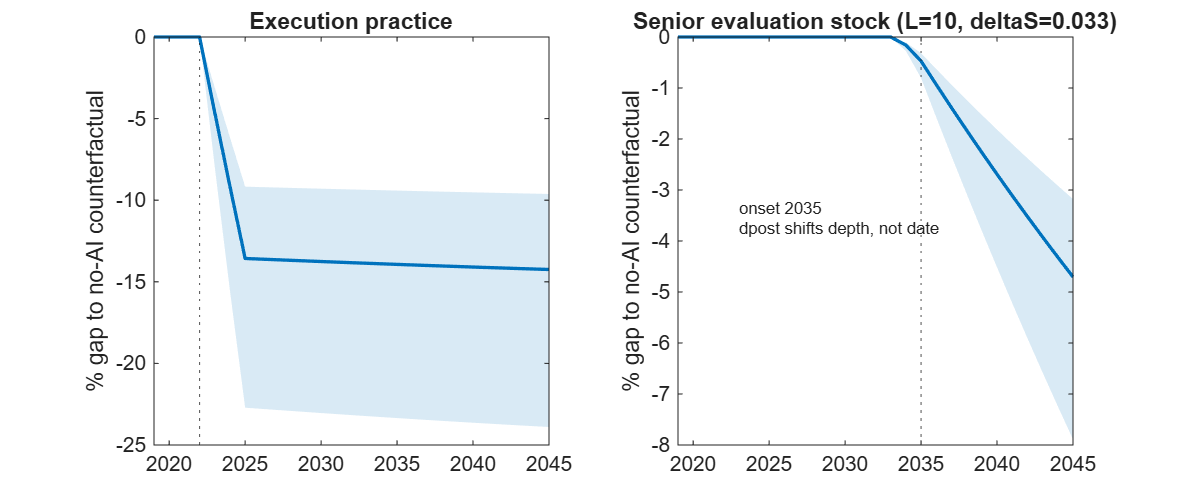}
\caption{Illustrative scenario for execution practice and the senior evaluation stock at $L=10$. Divergence begins near $2022+L$, and the full 2022--2025 practice deficit reaches the senior stock near $2025+L$. Timing follows from cohort arithmetic; magnitudes are not estimates.}
\label{fig:predictions}
\end{figure}

Figure~\ref{fig:lagsweep} holds the same parameterization fixed and varies the cohort lag over $L\in\{6,10,15\}$. The lag shifts the onset of the senior decline but not the common long-run asymptote in this illustration.

\begin{figure}[H]
\centering
\includegraphics[width=0.62\textwidth]{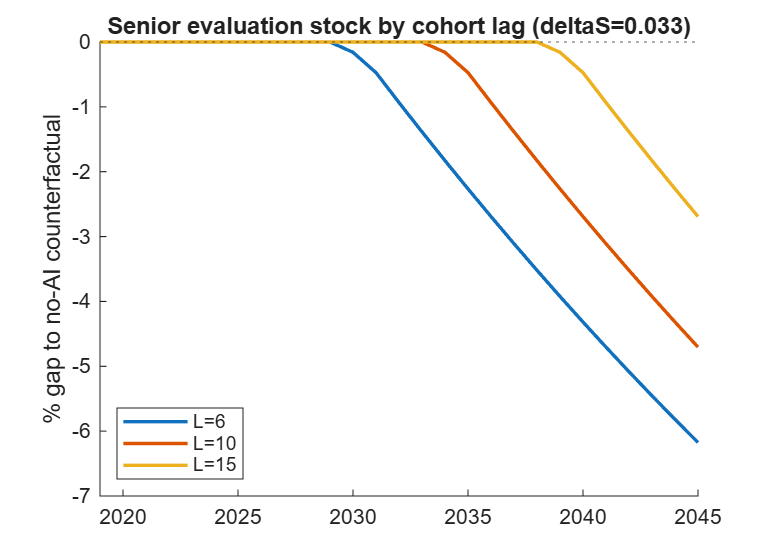}
\caption{Illustrative senior-stock paths for $L\in\{6,10,15\}$ with $\delta_S=0.033$. The full 2022--2025 practice deficit becomes visible near 2031, 2035, and 2040. The lag shifts the onset; the magnitude is calibrated rather than estimated.}
\label{fig:lagsweep}
\end{figure}

The empirical signature is therefore a sequence rather than a single coefficient. Human execution practice falls first; the senior evaluation stock remains stable until the affected cohorts mature; and evaluation quality deteriorates later. The concluding section identifies the data each stage would require.

\end{document}